\PassOptionsToPackage{unicode}{hyperref}
\PassOptionsToPackage{hyphens}{url}
\PassOptionsToPackage{dvipsnames,svgnames,x11names}{xcolor}
\documentclass[
  12pt]{article}

\usepackage{algorithm}
\usepackage{algpseudocode}
\usepackage{setspace}
\usepackage{multirow}
\usepackage{xspace}

\usepackage{amsmath,amssymb}
\usepackage{amsthm}
\theoremstyle{plain}
\newtheorem{proposition}{Proposition}
\newtheorem{lemma}{Lemma}
\theoremstyle{remark}
\newtheorem{remark}{Remark}
\usepackage{iftex}
\ifPDFTeX
  \usepackage[T1]{fontenc}
  \usepackage[utf8]{inputenc}
  \usepackage{textcomp} % provide euro and other symbols
\else % if luatex or xetex
  \usepackage{unicode-math}
  \defaultfontfeatures{Scale=MatchLowercase}
  \defaultfontfeatures[\rmfamily]{Ligatures=TeX,Scale=1}
\fi
\usepackage{lmodern}
\ifPDFTeX\else
\fi
\IfFileExists{upquote.sty}{\usepackage{upquote}}{}
\IfFileExists{microtype.sty}{% use microtype if available
  \usepackage[]{microtype}
  \UseMicrotypeSet[protrusion]{basicmath} % disable protrusion for tt fonts
}{}
\makeatletter
\@ifundefined{KOMAClassName}{% if non-KOMA class
  \IfFileExists{parskip.sty}{%
    \usepackage{parskip}
  }{% else
    \setlength{\parindent}{0pt}
    \setlength{\parskip}{6pt plus 2pt minus 1pt}}
}{% if KOMA class
  \KOMAoptions{parskip=half}}
\makeatother
\usepackage{xcolor}
\usepackage{tikz}
\usetikzlibrary{calc,arrows.meta}
\usepackage{pgfplots}
\pgfplotsset{compat=1.18}
\newcommand{\swA}{\raisebox{2pt}{\rule{16pt}{2.2pt}}}
\newcommand{\swB}{\textcolor{black!45}{\raisebox{2pt}{\rule{6pt}{2.2pt}\hspace{2.5pt}\rule{6pt}{2.2pt}}}}
\newcommand{\swC}{\textcolor{black!45}{\raisebox{2pt}{\rule{2.2pt}{2.2pt}\hspace{2.5pt}\rule{2.2pt}{2.2pt}\hspace{2.5pt}\rule{2.2pt}{2.2pt}}}}
\makeatletter
\ifx\paragraph\undefined\else
  \let\oldparagraph\paragraph
  \renewcommand{\paragraph}{
    \@ifstar
      \xxxParagraphStar
      \xxxParagraphNoStar
  }
  \newcommand{\xxxParagraphStar}[1]{\oldparagraph*{#1}\mbox{}}
  \newcommand{\xxxParagraphNoStar}[1]{\oldparagraph{#1}\mbox{}}
\fi
\ifx\subparagraph\undefined\else
  \let\oldsubparagraph\subparagraph
  \renewcommand{\subparagraph}{
    \@ifstar
      \xxxSubParagraphStar
      \xxxSubParagraphNoStar
  }
  \newcommand{\xxxSubParagraphStar}[1]{\oldsubparagraph*{#1}\mbox{}}
  \newcommand{\xxxSubParagraphNoStar}[1]{\oldsubparagraph{#1}\mbox{}}
\fi
\makeatother

\usepackage{longtable,booktabs,array}
\usepackage{calc} % for calculating minipage widths
\usepackage{etoolbox}
\makeatletter
\patchcmd\longtable{\par}{\if@noskipsec\mbox{}\fi\par}{}{}
\makeatother
\IfFileExists{footnotehyper.sty}{\usepackage{footnotehyper}}{\usepackage{footnote}}
\makesavenoteenv{longtable}
\usepackage{graphicx}
\makeatletter
\def\maxwidth{\ifdim\Gin@nat@width>\linewidth\linewidth\else\Gin@nat@width\fi}
\def\maxheight{\ifdim\Gin@nat@height>\textheight\textheight\else\Gin@nat@height\fi}
\makeatother
\setkeys{Gin}{width=\maxwidth,height=\maxheight,keepaspectratio}
\makeatletter
\def\fps@figure{htbp}
\makeatother

\makeatletter
\@ifpackageloaded{caption}{}{\usepackage{caption}}
\AtBeginDocument{%
\ifdefined\contentsname
  \renewcommand*\contentsname{Table of contents}
\else
  \newcommand\contentsname{Table of contents}
\fi
\ifdefined\listfigurename
  \renewcommand*\listfigurename{List of Figures}
\else
  \newcommand\listfigurename{List of Figures}
\fi
\ifdefined\listtablename
  \renewcommand*\listtablename{List of Tables}
\else
  \newcommand\listtablename{List of Tables}
\fi
\ifdefined\figurename
  \renewcommand*\figurename{Figure}
\else
  \newcommand\figurename{Figure}
\fi
\ifdefined\tablename
  \renewcommand*\tablename{Table}
\else
  \newcommand\tablename{Table}
\fi
}
\@ifpackageloaded{float}{}{\usepackage{float}}
\floatstyle{ruled}
\@ifundefined{c@chapter}{\newfloat{codelisting}{h}{lop}}{\newfloat{codelisting}{h}{lop}[chapter]}
\floatname{codelisting}{Listing}

\makeatother
\makeatletter
\@ifpackageloaded{caption}{}{\usepackage{caption}}
\@ifpackageloaded{subcaption}{}{\usepackage{subcaption}}
\makeatother

\ifLuaTeX
  \usepackage{selnolig}  % disable illegal ligatures
\fi
\usepackage[]{natbib}
\usepackage{bookmark}

\IfFileExists{xurl.sty}{\usepackage{xurl}}{} % add URL line breaks if available
\newcommand{\papertitle}{A Geometry-Aware Framework for Clustering Cylindrical Data}
\hypersetup{
  pdftitle={\papertitle},
  pdfauthor={Giuseppe Pandolfo; Luca Coraggio; Antonio D'Ambrosio},
  pdfkeywords={Circular-linear data; Color quantization; Directional statistics; Fr\'echet mean; Geodesic distance; K-means algorithm},
  colorlinks=true,
  linkcolor={blue},
  filecolor={Maroon},
  citecolor={Blue},
  urlcolor={Blue},
  pdfcreator={LaTeX via pandoc}}

\newcommand{\anon}{1}

\if0\anon\hypersetup{pdfauthor={}}\fi

\newcommand{\euckmeans}{\texttt{Euclidean k-means++}\xspace}
\newcommand{\chordkmeans}{\texttt{Chordal k-means++}\xspace}
\newcommand{\geokmeans}{\texttt{Geodesic k-means++}\xspace}
\newcommand{\sengupta}{\texttt{SenGupta}\xspace}
\newcommand{\rgbkmeans}{\texttt{RGB k-means++}\xspace}
\newcommand{\nox}{NO\textsubscript{x}\xspace}
\newcommand{\ie}{\textit{i.e.}\xspace}
\newcommand{\eg}{\textit{e.g.}\xspace}

\begin{document}

\def\spacingset#1{\renewcommand{\baselinestretch}%
  {#1}\small\normalsize} \spacingset{1}

%%%%%%%%%%%%%%%%%%%%%%%%%%%%%%%%%%%%%%%%%%%%%%%%%%%%%%%%%%%%%%%%%%%%%%%%%%%%%%

\if1\anon
  {
    \title{\bf \papertitle}
    \author{Giuseppe Pandolfo$^{1}$\thanks{Corresponding author.},
      Luca Coraggio$^{1,2}$
      and Antonio D'Ambrosio$^{1}$\\[1.5ex]
      {\normalsize $^{1}$Dipartimento di Scienze Economiche e Statistiche,}\\
      {\normalsize Universit\`a degli Studi di Napoli Federico II, 80126 Naples, Italy}\\[0.5ex]
      {\normalsize $^{2}$Centre for Studies in Economics and Finance (CSEF),}\\
      {\normalsize University of Naples Federico II, 80126 Naples, Italy}\\[1.5ex]
      {\small \texttt{giuseppe.pandolfo@unina.it}, \texttt{luca.coraggio@unina.it}, \texttt{antdambr@unina.it}}}
    \maketitle
  } \fi

\if0\anon
  {
    \bigskip
    \bigskip
    \bigskip
    \begin{center}
      {\LARGE\bf \papertitle}
    \end{center}
    \medskip
  } \fi

\bigskip
\begin{abstract}
  Cylindrical data pair an angle with a linear measurement. Clustering that ignores the periodicity of the angle breaks up groups lying across its origin. We formulate the \(K\)-means algorithm for a generic distance on the cylinder and instantiate it with the chordal distance of the ambient space and the geodesic distance along the surface, so that the two versions differ in the metric alone. Centroids are exact Fr\'echet means computed with known tools: the mean direction for the chord, the circular Fr\'echet mean for the arc. Seeded by \(K\)-means++, the algorithm converges in finitely many iterations at a cost comparable to classical \(K\)-means. Over an extensive empirical analysis on simulated data, Euclidean \(K\)-means is never meaningfully better and breaks down when clusters cross the origin; the two versions agree for concentrated angles, the chord incidentally prevailing only when the angle is uninformative; a model-based cylindrical mixture is the most variable method but describes elongated, correlated clusters better. In hue--value color image quantization, the cylindrical versions performs better on object detection based on color segmentation; on wind direction and nitrogen oxides data they recover two pollution regimes that Euclidean \(K\)-means cuts apart.
\end{abstract}

\noindent%
{\it Keywords:} Circular--linear data; Color quantization; Directional statistics; Fr\'echet mean; Geodesic distance; \(K\)-means algorithm
\vfill

\newpage
\spacingset{1.8} % DON'T change the spacing!

\section{Introduction}\label{sec:intro}

Clustering is a fundamental subject in statistics that has been widely investigated in classical multivariate analysis over the past few decades. The goal is to organize a set of objects into homogeneous groups such that the objects within the same group are more similar to each other than to objects in different groups. Due to its practical importance, clustering has received significant attention in various branches of statistics.

Less attention has been paid to cluster analysis of cylindrical data even though in modern applications we often encounter data consisting of both linear and circular variables. In other words, we observe directional data. Few examples are wind direction and another climatological variable such as wind speed or air temperature, the direction an animal moves and the distance moved, or wave direction and wave height. Recently, analyses of cylindrical data concentrated on the exploration of wind direction and SO2 concentration \citep{garcia2013exploring}, the analysis of Japanese earthquakes \citep{wang2013analysis}, the link between wildfire orientation and burnt area \citep{garcia2014test}, and space-time modeling of sea currents in the Adriatic Sea \citep{wang2015joint, lagona2015hidden}.

Cylindrical data are constrained to lie on the lateral surface of a cylinder $\mathcal{C} = S^1 \times \mathbb{R} = \{ (\theta, z) : \theta \in [0, 2\pi), \ z \in \mathbb{R} \}$, where $S^1 = \{ e^{i\theta} : \theta \in [0, 2\pi) \}$.

Analyzing such data must respect the manifold structure underlying the data. Traditional clustering algorithms neglect the cylindrical topology, and often produce misleading results due to the periodicity of the angular component and the combined linear-circular interactions.

Despite the growing interest, within the literature just a few of model-based proposals for clustering cylindrical data can be found. \cite{sengupta2020model} introduced
a mixture model approach based on the joint distribution of the linear and the circular variable for model based clustering of such data. \cite{fujita2017clustering} proposed a clustering method for data in cylindrical coordinates based on the $k$-means by deriving a  similarity measure from the log likelihood of the assumed probability distribution (i.e. the von Mises and Gaussian for the circular and linear variable, respectively). Then, to the best of authors' knowledge, there is not any other work aimed at providing techniques for clustering cylindrical data.

In this work, we propose a geometry-aware framework for clustering cylindrical data which is distance-based, and therefore independent of the distribution of the given data.
The framework is the \(K\)-means algorithm written for a generic distance on the cylinder, with centroids defined as Fr\'echet means, and we instantiate it with two distances that respect the periodicity of the angle: the chordal distance, inherited from the ambient space in which the cylinder is embedded, and the geodesic distance, defined as the length of the shortest path between two points along the surface.
The ingredients are classical. The chord leads to the mean direction of directional statistics \citep{MardiaJupp2000}, while the angular part of the geodesic distance, the arc length on the circle, and the Fr\'echet mean it induces have been studied by \citet{McKilliam2012}, \citet{HotzHuckemann2015} and \citet{Cazals2021}; the last of these applies it within a \(k\)-means++ algorithm on the flat torus.
Our contribution is to bring them into a single algorithm for cylindrical data, in which the two variants share seeding, assignment rule and stopping criterion and differ in the metric alone, to establish its convergence and its computational cost, and to assess, through simulations and two applications, when accounting for the geometry matters and which of the two metrics is to be preferred.
Such approach accounts for both circular and linear components simultaneously while preserving their geometric constraints, and aims to be considered as an alternative to the model-based tools available for clustering cylindrical data.

The paper is organized as follows. Section~\ref{sec:method} introduces the two distances on the cylinder, derives the geodesic one as the length of the helix joining two points on the lateral surface (Section~\ref{subsec:geodesic_distance}), and states the clustering objective and the resulting algorithm for a generic distance, together with its computational cost (Section~\ref{subsec:method}) and its convergence guarantee (Section~\ref{subsec:convergence}). Section~\ref{sec:simulations} evaluates the two variants against the Euclidean \(K\)-means and a model-based mixture through a simulation study, and Sections~\ref{sec:images} and~\ref{sec:case1} apply them to color image quantization and to wind direction and pollutant concentration data.

\section{Method}\label{sec:method}
%\section{The geodesic distance on the cylinder}\label{}
%\section{Geodesic Distance on the Lateral Surface of the Unit Cylinder}

Since clustering is the grouping of similar objects, a measure that can determine whether two cylindrical objects are similar or dissimilar is required. Two such measures are considered here: the Euclidean distance of the ambient space, and the geodesic distance along the lateral surface.

Let \( \mathcal{M} \) denote the lateral surface of the cylinder of radius \(r > 0\) embedded in \(\mathbb{R}^3\), defined as
\[
\mathcal{M} = \{ (r\cos \theta, r\sin \theta, z) \in \mathbb{R}^3 : \theta \in [0, 2\pi), z \in \mathbb{R} \}.
\]
Whatever its radius, this surface can be viewed as the Cartesian product of the circle \( S^1 \) and the real line \( \mathbb{R} \), i.e., \( \mathcal{M} \cong S^1 \times \mathbb{R} = \mathcal{C} \) where each observation $x \in \mathcal{M}$ can be represented as
\[
x = (\theta, z), \quad \theta \in [0, 2\pi), \ z \in \mathbb{R}.
\]
The radius does not enter the coordinates of a point but only the distances between points, where it sets how much a difference in angle weighs against a difference in \(z\).

The Euclidean distance between two points \(x = (\theta_x, z_x)\) and \(y = (\theta_y, z_y)\) of \(\mathcal{M}\), taken in the ambient space \(\mathbb{R}^3\), measures their angular separation by the \emph{chord} that subtends it in the plane, that is, by \(r\) times the chord of the unit circle,
\begin{equation}
\delta^{\mathrm{chord}}(x,y) = 2\Big|\sin\tfrac{\theta_x - \theta_y}{2}\Big| ,
\label{eq:deltachord}
\end{equation}
and reads
\begin{equation}
d_{\mathrm{chord}}(x, y)^2 = r^2\,\delta^{\mathrm{chord}}(x,y)^2 + (z_x - z_y)^2 = 2r^2\big(1 - \cos(\theta_x - \theta_y)\big) + (z_x - z_y)^2 .
\label{eq:dchord}
\end{equation}
We refer to \(d_{\mathrm{chord}}\) as the \emph{chordal} distance. It respects the periodicity of \(\theta\), and the angular centroid it induces is available in closed form: the sum of squared chords from a set of angles \(\theta_1, \ldots, \theta_m\) is minimized by their mean direction \(\mathrm{atan2}\big(\sum_i \sin\theta_i, \sum_i \cos\theta_i\big)\) \citep{MardiaJupp2000}. The chord, however, cuts through the interior of the cylinder rather than following its surface, and minimizing squared chords is not the same as minimizing squared arcs: the two circular means that result, the mean direction and the Fr\'echet mean of the arc length, are compared by \citet{McKilliam2012} and \citet{HotzHuckemann2015}. The distance that follows the surface is derived next.

\subsection{The geodesic distance}\label{subsec:geodesic_distance}

The shortest path (geodesic) on the lateral surface of the cylinder between two points \(x = (\theta_x, z_x)\) and \(y = (\theta_y, z_y)\) is an arc of a cylindrical \textbf{helix}, as depicted in Figure~\ref{fig:fig1}. Its angular extent is the minimal angular difference
\[
\delta^{\mathrm{geo}}(x, y) = \min_{\varphi \in \mathbb{Z}} | \theta_x - \theta_y + 2 \pi \varphi | ,
\]
where \(\mathbb{Z}\) denotes the set of all integers and \(\varphi\) is chosen so as to minimize the angular distance on the unit circle \(S^1\). Since \(\theta_x, \theta_y \in [0, 2\pi)\), the difference \(\theta_x - \theta_y\) lies in \((-2\pi, 2\pi)\), so the minimizing \(\varphi\) necessarily belongs to \(\{-1, 0, 1\}\), and the minimal angular difference can accordingly be computed as
\[
\delta^{\mathrm{geo}}(x, y) = \min \big( |\theta_x - \theta_y|, |\theta_x - \theta_y + 2\pi|, |\theta_x - \theta_y - 2\pi| \big).
\]
This ensures the geodesic distance correctly accounts for the circular periodicity. Parameterized by the angle \(u\) traveled along the shorter arc, the helix reads
\[
\boldsymbol{\gamma}(u) = \big( r\cos(\theta_x + s u), \ r\sin(\theta_x + s u), \ z_x + h u \big), \quad u \in [0, \delta^{\mathrm{geo}}(x, y)],
\]
where \(s \in \{-1, +1\}\) is the direction of the shorter arc from \(\theta_x\) to \(\theta_y\) and \(h \in \mathbb{R}\) is the helix pitch, the vertical displacement per unit angle,
\[
h = \frac{z_y - z_x}{\delta^{\mathrm{geo}}(x, y)} .
\]
Hence, the length of this helix segment is
\begin{equation}\label{eq:dgeo}
\begin{aligned}
d_{\mathrm{geo}}(x, y) = L &= \int_{0}^{\delta^{\mathrm{geo}}(x,y)} \|\boldsymbol{\gamma}'(u)\| \, du = \sqrt{r^2 + h^2} \cdot \delta^{\mathrm{geo}}(x,y) \\
&= \sqrt{r^2\,\delta^{\mathrm{geo}}(x,y)^2 + (z_x - z_y)^2}.
\end{aligned}
\end{equation}
This is the exact formula for the geodesic distance on the cylinder, which we henceforth denote by \(d_{\mathrm{geo}}\). Both angular measures are functions of the same separation: the chord~\eqref{eq:deltachord} is \(\delta^{\mathrm{chord}}(x,y) = 2\sin\big(\delta^{\mathrm{geo}}(x,y)/2\big)\), since \(\delta^{\mathrm{geo}}(x,y) \in [0,\pi]\).

When \(h = 0\), the helix reduces to a circular arc of length $r\,\delta^{\mathrm{geo}}(x,y)$. For \(h \neq 0\), the curve ascends vertically while wrapping around the cylinder, increasing the overall length. With $\theta_{x}=\theta_{y}$ the geodesic path is simply a vertical line segment.

\begin{figure}[htbp]
  \centering
  \includegraphics[width=0.7\textwidth]{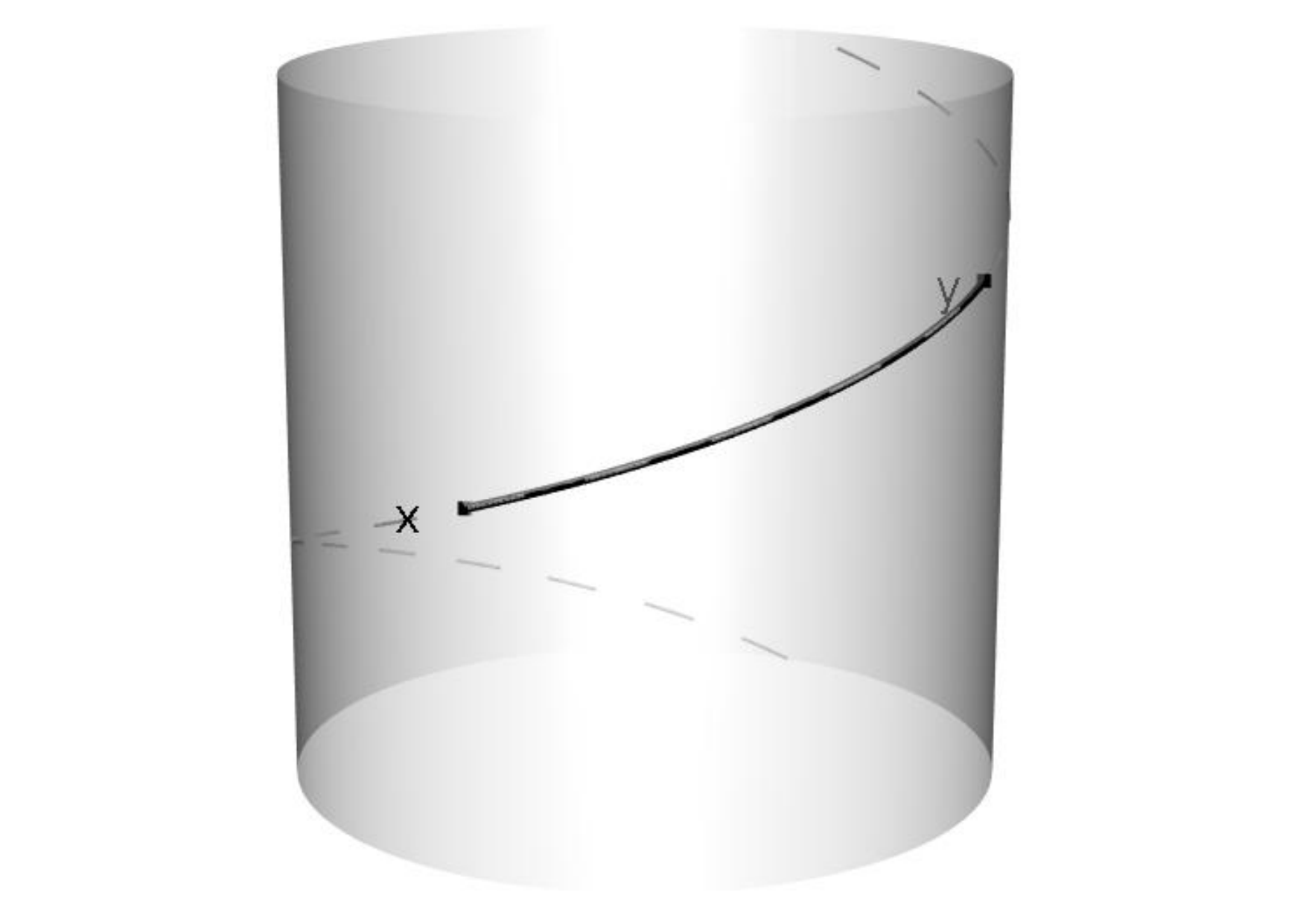}
  \caption{The geodesic distance between two points of the cylinder is the length of the helix joining them.}\label{fig:fig1}
\end{figure}

One can note that it naturally generalizes Euclidean distance by incorporating circular periodicity in \(\theta\) which cannot be neglected when analyzing cylindrical data. %To illustrate the difference between two measures, the corresponding contour plot

%---------------------------------------------------
%The geodesic distance \( d_{\mathcal{M}} \) between any two points \( x = (\theta_x, z_x) \) and \( y = (\theta_y, z_y) \) on \( \mathcal{M} \), where we identify points via their cylindrical coordinates, is the length of the shortest path on the surface connecting them. Since the cylinder is a flat surface with no curvature along its axis and periodic in the angular direction, the geodesic distance is given by
%\begin{equation}
%d_{\mathcal{M}}(x, y) = \sqrt{\delta^{\mathrm{geo}}(x, y)^2 + (z_x - z_y)^2},
%\label{eq:geodesic_cylinder}
%\end{equation}
%where
%\[
%\delta^{\mathrm{geo}}(x, y) = \min_{\varphi \in \mathbb{Z}} | \theta_x - \theta_y + 2 \pi \varphi |
%\]
%is the minimal angular difference. Where $\mathbb{Z}$ denotes he set of all integers and $\varphi$ is chosen to minimize the angular distance on the unit circle \( S^1 \). In practice, it suffices to check only a small set of values of \( k \), typically
%\[
%\varphi \in \{-1, 0, 1\},
%\]
%since adding or subtracting one full rotation \( 2\pi \) once is enough to find the shortest arc length. Therefore, the minimal angular difference can be computed as
%\[
%\delta^{\mathrm{geo}}(x, y) = \min \big( |\theta_x - \theta_y|, |\theta_x - \theta_y + 2\pi|, |\theta_x - \theta_y - 2\pi| \big).
%\]
%This ensures the geodesic distance correctly accounts for the circular periodicity.
%---------------------------------------------------

The geodesic distance \( d_{\mathrm{geo}} \) satisfies the following properties:

\begin{itemize}
  \item \textbf{Non-negativity:} $d_{\mathrm{geo}}(x, y) \geq 0$ for all $x, y \in \mathcal{C}$, and $d_{\mathrm{geo}}(x, y) = 0 \Leftrightarrow x = y$.

  \item \textbf{Symmetry:} \( d_{\mathrm{geo}}(x, y) = d_{\mathrm{geo}}(y, x) \).

  \item \textbf{Triangle inequality:} For any \( x, y, w \in \mathcal{C} \),
        \[
        d_{\mathrm{geo}}(x, w) \leq d_{\mathrm{geo}}(x, y) + d_{\mathrm{geo}}(y, w).
        \]
\end{itemize}

\subsection{\texorpdfstring{\(K\)-means clustering for cylindrical data}{K-means clustering for cylindrical data}}\label{subsec:method}

Consider a dataset \( \{x_i\}_{i=1}^n \) of size $n$ where each observation \( x_i = (\theta_i, z_i) \) lies on the cylindrical manifold $\mathcal{C} = S^{1} \times \mathbb{R}$, with \( \theta_i \in [0, 2\pi) \) representing the angular component and \( z_i \in \mathbb{R} \) the linear component, and let \(d \in \{d_{\mathrm{geo}}, d_{\mathrm{chord}}\}\) denote either of the two distances introduced above, for a fixed radius \(r > 0\).

Given a predefined number of clusters \( K (>1) \), the goal is to partition the dataset into clusters \( \{ C_k \}_{k=1}^K \) by minimizing the within-cluster sum of squared distances:
\begin{equation}
\min_{\{ C_k \}_{k=1}^K} \sum_{k=1}^K \sum_{x_i \in C_k} d \left( x_i, \mu_k \right)^2,
\label{eq:clustering_objective}
\end{equation}
where \(\mu_k \in \mathcal{C}\) represents the cluster centroid computed as the Fr\'echet mean:
\[
\mu_k = \arg\min_{\mu \in \mathcal{C}} \sum_{x_i \in C_k} d(x_i, \mu)^2.
\]

Here and in what follows we write, with a slight abuse of notation, \(\delta^{\mathrm{geo}}(\theta_x, \theta_y) = \min_{\varphi \in \mathbb{Z}} |\theta_x - \theta_y + 2\pi\varphi|\) for the minimal difference between two angles, and likewise \(\delta^{\mathrm{chord}}(\theta_x, \theta_y)\). Both squared distances, \eqref{eq:dgeo} and~\eqref{eq:dchord}, then separate additively into an angular and a linear term, \(d(x, y)^2 = r^2\,\delta(\theta_x, \theta_y)^2 + (z_x - z_y)^2\) with \(\delta \in \{\delta^{\mathrm{geo}}, \delta^{\mathrm{chord}}\}\) the angular distance corresponding to \(d\), so the Fr\'echet mean factorizes into two independent one-dimensional problems, one per coordinate. The factor \(r^2\) multiplies the whole angular problem and leaves its minimizer unchanged: the radius affects which centroid a point is assigned to, not where the centroid of a given cluster lies. Under either distance the linear component of the centroid is the arithmetic mean
\begin{equation}
\bar{z}_k = \frac{1}{|C_k|} \sum_{x_i \in C_k} z_i,
\label{eq:centroid_linear}
\end{equation}
where \( |C_k| \) is the size of the $k$-th cluster \( C_k \). The angular component is the Fr\'echet mean on the circle under the corresponding angular distance:
\begin{itemize}
  \item under \(d_{\mathrm{geo}}\), the minimizer of the sum of squared arc lengths,
        \begin{equation}
        \bar{\theta}_k = \arg\min_{\vartheta \in [0, 2\pi)} \, \sum_{x_i \in C_k} \delta^{\mathrm{geo}}(\theta_i, \vartheta)^2 ,
        \label{eq:centroid_geo}
        \end{equation}
        which admits no closed form but can be computed exactly, in \(O(|C_k| \log |C_k|)\) operations \citep{McKilliam2012, HotzHuckemann2015, Cazals2021}; Appendix~\ref{sec:frechet} gives a self-contained derivation and the procedure used here, Algorithm~\ref{alg:frechet_circular};
  \item under \(d_{\mathrm{chord}}\), the minimizer of the sum of squared chords \(\sum_{x_i \in C_k} 2\big(1 - \cos(\theta_i - \vartheta)\big)\), namely the mean direction \citep{MardiaJupp2000}, is 
        \begin{equation}
        \bar{\theta}_k = \mathrm{atan2}\Big( \sum_{x_i \in C_k} \sin\theta_i, \ \sum_{x_i \in C_k} \cos\theta_i \Big).
        \label{eq:centroid_meandir}
        \end{equation}
\end{itemize}
The Fr\'echet mean always exists: the angular objective is continuous on the compact circle \(S^1\), while the linear objective is a strictly convex quadratic in \(\bar{z}\); a minimizer of their sum therefore exists on \(\mathcal{C} = S^1 \times \mathbb{R}\). 
The arc-length objective~\eqref{eq:centroid_geo} may have several minimizers (\eg for \(m\) angles equispaced on the circle it is minimized at \(m\) distinct points), while the minimizer of the sum of squared chords is unique whenever the resultant length \(R_k = \big\| \sum_{x_i \in C_k} (\cos\theta_i, \sin\theta_i) \big\|\) is positive, and every \(\vartheta\) is a minimizer when \(R_k = 0\), the sum being then constant \citep{MardiaJupp2000}. 

The clustering algorithm iteratively performs the following steps until convergence:
\begin{enumerate}
  \item \textbf{Assignment step:} Assign each point \( x_i \) to the cluster whose centroid is nearest under \(d\):
        \[
        C_k = \left\{ x_i : k = \arg\min_{j} d(x_i, \mu_j) \right\}.
        \]
  \item \textbf{Update step:} Recompute the centroids \( \mu_k \) for each cluster according to~\eqref{eq:centroid_linear} and to~\eqref{eq:centroid_geo} or~\eqref{eq:centroid_meandir}.
\end{enumerate}
This is the classical \( K \)-means algorithm on the cylindrical manifold, with \(d\) in place of the Euclidean distance.

Initialization of iterative algorithms can have a significant impact on the resulting performances. Here, the initialization adapts the classical \(K\)-means++ method of \cite{arthur2006k} to the cylinder: the first centroid is selected uniformly at random from the dataset, and each subsequent centroid is a data point chosen with probability proportional to its squared distance from the nearest already chosen centroid. Formally, if \(\{\mu_1, \ldots, \mu_{k-1}\}\) are the previously selected centroids, the probability of selecting \(x_i\) as the next centroid is
\[
P(x_i) = \frac{D_i}{\sum_{j=1}^n D_j}, \quad \text{where } D_i = \min_{1 \leq j < k} d(x_i, \mu_j)^2,
\]
so that the initial centroids are well spread with respect to the distance in use. The whole procedure is summarized in Algorithm~\ref{alg:geodesic_kmeans}. We refer to it as the \emph{geodesic} \(K\)-means when \(d = d_{\mathrm{geo}}\) and as the \emph{chordal} \(K\)-means when \(d = d_{\mathrm{chord}}\); seeding, assignment rule and stopping criterion are identical in the two variants, so that their comparison isolates the effect of the metric alone.

\begin{algorithm}[htbp]
  \begin{spacing}{0.8}
    \caption{\(K\)-means clustering for cylindrical data}
    \label{alg:geodesic_kmeans}
    \begin{algorithmic}[1]
      \Require Dataset \( \{x_i = (\theta_i, z_i)\}_{i=1}^n \subset \mathcal{C} = S^1 \times \mathbb{R} \), radius \( r > 0 \), distance \( d \in \{d_{\mathrm{geo}}, d_{\mathrm{chord}}\} \) with angular component \( \delta \in \{\delta^{\mathrm{geo}}, \delta^{\mathrm{chord}}\} \), number of clusters \( K \), maximum iterations \( T \)
      \Ensure Cluster assignments \( \{C_k\}_{k=1}^K \), centroids \( \{\mu_k = (\bar{\theta}_k, \bar{z}_k)\}_{k=1}^K \)

      \State \textbf{Initialization: $K$-means++}
      \State Randomly select the first centroid \(\mu_1^{(0)}\) from the dataset
      \For{\(k = 2\) to \(K\)}
      \State Compute \( D_i = \min_{1 \le j < k} d(x_i, \mu_j^{(0)})^2 \) for each \( i = 1, \ldots, n \)
      \State Sample the centroid \(\mu_k^{(0)}\) from \(\{x_i\}\) with probability proportional to \(D_i\)
      \EndFor

      \For{\( t = 1 \) to \( T \)}
      \State \textbf{Assignment step:}
      \For{each data point \( x_i \)}
      \State Assign \( x_i \) to cluster \( C_k^{(t)} \), with \( k = \arg\min_{j=1,\ldots,K} d(x_i, \mu_j^{(t-1)}) \)
      \EndFor
      \State \textbf{Update step:}
      \For{each cluster \( k = 1, \ldots, K \)}
      \State \( \bar{z}_k^{(t)} \gets \dfrac{1}{|C_k^{(t)}|} \sum_{x_i \in C_k^{(t)}} z_i \) \Comment{eq.~\eqref{eq:centroid_linear}}
      \State \( \bar{\theta}_k^{(t)} \gets \arg\min_{\vartheta \in [0, 2\pi)} \sum_{x_i \in C_k^{(t)}} \delta(\theta_i, \vartheta)^2 \) \Comment{see below}
      \State Set centroid \( \mu_k^{(t)} = (\bar{\theta}_k^{(t)}, \bar{z}_k^{(t)}) \)
      \EndFor
      \If{cluster assignments \( \{C_k^{(t)}\} \) have converged or \( t = T \)}
      \State \textbf{break}
      \EndIf
      \EndFor
      \State \Return final cluster assignments \( \{C_k^{(t)}\} \) and centroids \( \{\mu_k^{(t)}\} \)
      \Statex \hrulefill
      \begin{itemize}
        \item if \( d = d_{\mathrm{geo}} \), then \( \delta = \delta^{\mathrm{geo}} \) and \( \bar{\theta}_k^{(t)} \) is the Fr\'echet mean of the arc, eq.~\eqref{eq:centroid_geo}, computed by Algorithm~\ref{alg:frechet_circular};
        \item if \( d = d_{\mathrm{chord}} \), then \( \delta = \delta^{\mathrm{chord}} \) and \( \bar{\theta}_k^{(t)} \) is the mean direction, eq.~\eqref{eq:centroid_meandir};
        \item the radius \( r \) enters only through \( d \), in the initialization and in the assignment step: the update step does not depend on it.
      \end{itemize}
    \end{algorithmic}
  \end{spacing}
\end{algorithm}

\medskip
\begin{remark}\label{remark:arc_vs_chord}
  The two variants differ only in how far a point is deemed to be from its centroid along the circle. For an angular separation \(\Delta = \delta^{\mathrm{geo}}(\theta_i, \bar\theta) \in [0,\pi]\), and up to the common factor \(r^2\), the arc charges \(\Delta^2\) and the chord \(\big(\delta^{\mathrm{chord}}\big)^2 = 4\sin^2(\Delta/2) = \Delta^2 - \Delta^4/12 + O(\Delta^6)\): the two agree to second order and coincide, for practical purposes, in the concentrated regime, but the chordal cost saturates at \(4\) for \(\Delta = \pi\), whereas \(\big(\delta^{\mathrm{geo}}\big)^2\) attains \(\pi^2 \approx 9.87\). The chord therefore systematically under-penalizes points far from the centroid, and the two angular centroids move apart accordingly; the resulting difference between the mean direction and the Fr\'echet mean of the arc is the subject of \citet{McKilliam2012} and \citet{HotzHuckemann2015}, and Appendix~\ref{sec:arc_chord} illustrates it on a worked example. The radius \(r\), in turn, sets the relative weight of the two coordinates under either distance: its choice is application-specific and is not addressed in this paper, where we set \(r = 1\) throughout.
\end{remark}

\paragraph*{Computational cost.}
All the ingredients of Algorithm~\ref{alg:geodesic_kmeans} have known costs. The \(K\)-means++ seeding costs \(O(nK)\) when \(D_i\) is carried along as a running minimum \citep{arthur2006k}, and is paid once. Each iteration evaluates \(nK\) distances, each in \(O(1)\) operations under either metric, and then updates the centroids: the arithmetic mean and the mean direction cost \(O(|C_k|)\) per cluster, the Fr\'echet mean of the arc \(O(|C_k| \log |C_k|)\), the cost of sorting the angles \citep{HotzHuckemann2015, Cazals2021}. An iteration therefore costs \(O(nK)\) for the chordal \(K\)-means and \(O(nK + n \log n)\) for the geodesic one, that is, the cost of the classical \(K\)-means as soon as \(K \gtrsim \log n\).

\subsection{\texorpdfstring{Convergence of the \(K\)-means algorithm}{Convergence of the K-means algorithm}}\label{subsec:convergence}

The convergence argument of the classical \(K\)-means carries over to Algorithm~\ref{alg:geodesic_kmeans} unchanged, because it uses only two facts: the assignment step minimizes the objective for fixed centroids, and the update step returns an exact Fr\'echet mean of the distance in use. Both hold for \(d_{\mathrm{geo}}\) and for \(d_{\mathrm{chord}}\).

\begin{proposition}\label{prop:convergence}
  Let \(\{x_i\}_{i=1}^n\) be a finite dataset on \(\mathcal{C}\) and let \(d \in \{d_{\mathrm{geo}}, d_{\mathrm{chord}}\}\), for any fixed radius \(r > 0\). Consider Algorithm~\ref{alg:geodesic_kmeans}, which alternates between assigning points to the nearest centroid under \(d\) and updating centroids as Fr\'echet means of the assigned points. Then the sequence of objective values
  \[
  J^{(t)} = \sum_{k=1}^K \sum_{x_i \in C_k^{(t)}} d(x_i, \mu_k^{(t)})^2
  \]
  generated by the algorithm satisfies \(J^{(t+1)} \leq J^{(t)}\) for all \(t\), and, provided points are reassigned only when this strictly decreases \(J\) and ties in the centroid update are resolved by a fixed rule, the algorithm converges in a finite number of iterations to a local minimum or stationary point of \(J\).
\end{proposition}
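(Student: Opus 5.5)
The proof follows the classical monotonicity argument for Lloyd's algorithm. It uses only two facts established in Section~\ref{subsec:method}: the assignment step is an exact minimization over partitions for fixed centroids, and the update step returns an exact Fr\'echet mean. To make the bookkeeping precise, define the two-argument functional
\[
\Phi(\{C_k\},\{\mu_k\}) = \sum_{k=1}^K \sum_{x_i\in C_k} d(x_i,\mu_k)^2 ,
\]
so that \(J^{(t)} = \Phi(\{C_k^{(t)}\},\{\mu_k^{(t)}\})\).

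\textbf{Step 1 (assignment).} At iteration \(t+1\), each \(x_i\) is sent to \(\arg\min_j d(x_i,\mu_j^{(t)})\). The objective separates over points, so this choice minimizes \(\Phi(\cdot,\{\mu_k^{(t)}\})\) over all partitions, which gives
\[
\Phi(\{C_k^{(t+1)}\},\{\mu_k^{(t)}\}) \le \Phi(\{C_k^{(t)}\},\{\mu_k^{(t)}\}) = J^{(t)} .
\]

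\textbf{Step 2 (update).} For fixed clusters, \(\Phi\) separates over \(k\). Each \(\mu_k^{(t+1)}\) is a global minimizer of \(\mu \mapsto \sum_{x_i\in C_k^{(t+1)}} d(x_i,\mu)^2\). For this I invoke the factorization \(d^2 = r^2\delta^2 + (z_x-z_y)^2\): the linear part is minimized by the arithmetic mean~\eqref{eq:centroid_linear}, and the angular part is minimized exactly by~\eqref{eq:centroid_geo} (computed exactly by Algorithm~\ref{alg:frechet_circular}) or by~\eqref{eq:centroid_meandir}. Existence of the minimizer was established above by compactness of \(S^1\) and strict convexity in \(z\). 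Hence
\[
J^{(t+1)} = \Phi(\{C_k^{(t+1)}\},\{\mu_k^{(t+1)}\}) \le \Phi(\{C_k^{(t+1)}\},\{\mu_k^{(t)}\}) \le J^{(t)} .
\]
A cluster that becomes empty contributes zero whatever its centroid, so keeping its old centroid does not affect the inequality.

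\textbf{Step 3 (finite termination).} The key is that the state of the algorithm is determined by the partition alone. Because ties in the centroid update are resolved by a fixed rule, the centroids are a deterministic function of the partition; this matters for the arc objective, which may have several minimizers, and for the chord when \(R_k=0\). Consequently \(J^{(t)}\) is a function of the partition. There are at most \(K^n\) partitions.

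Because points are reassigned only when this strictly decreases \(\Phi\), any iteration that changes the partition gives a strict decrease in \(J\). No partition can therefore recur, and after finitely many iterations the partition repeats consecutively. When the partition repeats, the fixed tie-breaking rule returns the same centroids, so the algorithm is at a fixed point and the stopping criterion fires.

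At the fixed point, each point is at a nearest centroid and each centroid is a Fr\'echet mean of its cluster. Thus no single-block move (changing assignments with the centroids fixed, or changing centroids with the assignments fixed) decreases \(\Phi\), which is the sense of ``local minimum or stationary point'' intended.

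\textbf{Main obstacle.} The only delicate point is Step 3. Without the tie-breaking conventions, the algorithm could cycle among equal-cost configurations: alternative arc Fr\'echet means, or reassignment between equidistant centroids. So the care goes into showing that the two stated conventions make the map from partition to centroid to new partition well defined, and make \(J\) strictly decrease whenever the partition changes. Monotonicity itself is routine once exactness of the Fr\'echet mean is granted from the earlier section.
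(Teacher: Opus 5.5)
Your proposal is correct and follows the paper's proof step for step: the assignment step cannot increase the objective, the exact Fr\'echet-mean update (via the additive separation of $d^2$) cannot increase it either, and since there are finitely many partitions and $J$ strictly decreases whenever the partition changes, no partition can recur. Your extra bookkeeping makes explicit what the paper leaves implicit. This includes noting that $J$ depends only on the partition, handling empty clusters, and reading ``local minimum or stationary point'' as optimality under each separate block update.
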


\noindent\textbf{Proof.}

\begin{enumerate}
  \item \emph{Monotonicity:} At iteration \(t\), given the centroids \(\{\mu_k^{(t-1)}\}_{k=1}^K\), the assignment step defines clusters
        \[
        C_k^{(t)} = \{x_i : k = \arg\min_j d(x_i, \mu_j^{(t-1)})\}.
        \]
        By construction, this assignment minimizes the within-cluster sum of squared distances for fixed centroids, hence
        \[
        \sum_{k=1}^K \sum_{x_i \in C_k^{(t)}} d(x_i, \mu_k^{(t-1)})^2 \leq \sum_{k=1}^K \sum_{x_i \in C_k^{(t-1)}} d(x_i, \mu_k^{(t-1)})^2 = J^{(t-1)}.
        \]

  \item \emph{Centroid update:} For fixed clusters \(\{C_k^{(t)}\}\), the centroid update sets
        \[
        \mu_k^{(t)} = \arg\min_{\mu \in \mathcal{C}} \sum_{x_i \in C_k^{(t)}} d(x_i, \mu)^2,
        \]
        the Fr\'echet mean of the cluster. By the additive separability of \(d^2\), this minimizer is attained exactly at the pair given by~\eqref{eq:centroid_linear} and by~\eqref{eq:centroid_geo} or~\eqref{eq:centroid_meandir}: the Fr\'echet mean of the arc is computed exactly by Algorithm~\ref{alg:frechet_circular}, as established in Proposition~\ref{prop:frechet}, and the mean direction is the exact minimizer of the chordal cost whenever \(R_k > 0\), every direction being a minimizer when \(R_k = 0\). Hence
        \[
        \sum_{x_i \in C_k^{(t)}} d(x_i, \mu_k^{(t)})^2 \leq \sum_{x_i \in C_k^{(t)}} d(x_i, \mu_k^{(t-1)})^2 .
        \]
        Summing over all \(k\):
        \[
        J^{(t)} = \sum_{k=1}^K \sum_{x_i \in C_k^{(t)}} d(x_i, \mu_k^{(t)})^2 \leq \sum_{k=1}^K \sum_{x_i \in C_k^{(t)}} d(x_i, \mu_k^{(t-1)})^2.
        \]

  \item \emph{Combining steps:} Combining the two inequalities yields \(J^{(t)} \leq J^{(t-1)}\). Thus \(\{J^{(t)}\}\) is a monotonically non-increasing sequence bounded below by zero.

  \item \emph{Finite convergence:} There are finitely many clusterings of a finite dataset, and by hypothesis every change of clustering strictly decreases \(J\); no clustering can therefore recur, and the algorithm terminates after finitely many iterations at some iteration \(T\) where
        \[
        \{C_k^{(T)}\} = \{C_k^{(T-1)}\}, \quad \{\mu_k^{(T)}\} = \{\mu_k^{(T-1)}\}.
        \]
        At this point, \(J^{(T)}\) is a local minimum or stationary value of the objective.
\end{enumerate}

\qed

Note that if the cylinder is replaced by \(\mathbb{R}^d\) with the Euclidean metric, Algorithm~\ref{alg:geodesic_kmeans} reduces to the classical \(K\)-means algorithm, with cluster centroids as arithmetic means and objective function as the sum of squared Euclidean distances.

\section{Simulations}\label{sec:simulations}
To investigate the performances of the proposed clustering method we perform an extensive analysis on simulated data, considering varying number of clusters at different level of overlap and geometrical structure.
We cluster the data using our proposed method as well as alternative ones from the literature, and compare the ability of the different methodologies to retrieve the true clustering structure, reporting performance in terms of adjusted Rand Index.
Subsection~\ref{subsec:dgp_description} describes the data generating processes, Subsection~\ref{subsec:sim_methods} the compared clustering methodologies and Subsection~\ref{subsec:sim_results} discusses the results.

\subsection{Data generating process}\label{subsec:dgp_description}

Simulated data are generated from a mixture distribution with Mardia--Sutton components~\citep{MardiaEtAl1979}.
For each design we fix $K$ and the cluster sizes $n_k$, $k = 1, \ldots, K$, and the $n_k$ points of cluster $k$ are drawn as follows:
(i) the angular coordinate, $\theta$, is generated from a von Mises distribution with mean $\mu_{0,k}$ and concentration $\kappa_{k}$; (ii) the linear coordinate is generated conditional on $\theta$ from a normal distribution with parameters
\begin{equation*}
\begin{matrix}  \mu_{c,k} = \mu_k + \sqrt{\sigma_k^2\kappa_k}\left(\rho_{1,k}(\cos(\theta)-\cos(\mu_{0,k})) + \rho_{2,k}(\sin(\theta)-\sin(\mu_{0,k}))\right),\\
    \sigma_{c,k}^2 =  \sigma^2_k(1-(\rho_{1,k}^2 + \rho_{2,k}^2)),
\end{matrix}
\end{equation*}
where $\mu_k$ and $\sigma^2_k$ are the unconditional mean and variance of the linear coordinate in cluster $k$, and $\rho_{1,k}$ and $\rho_{2,k}$ govern the correlation between the angular and the linear component, with $\sqrt{\rho_{1,k}^2 + \rho_{2,k}^2} < 1$.

We consider ten designs, all with balanced clusters of $n_k = 100$ points: eight are obtained by crossing the factors reported in panel~(a) of Table~\ref{tab:simconfigs}, and two more are described in panel~(b) of the same table.

\begin{table}[!htpb]
  \centering
  \caption{Configuration of the simulated designs. Panel~(a) reports the factors
    that are crossed to obtain the eight factorial designs, labeled
    \texttt{[A][L]K}: the first letter is the angular dispersion, the second the
    linear variance, and $K\in\{3,6\}$ the number of clusters. Panel~(b) reports the
    two designs that fall outside the factorial, in which cluster shape is driven by
    a strong cluster-specific angular--linear correlation. Parameters are those of
    the Mardia--Sutton components described in Section~\ref{subsec:dgp_description};
    $\mu_0$ is the angular mean, $\mu$ the unconditional linear mean, $\kappa$ the
    angular concentration and $\sigma^2$ the unconditional linear variance. All
    designs feature balanced clusters with $n_k=100$ observations each.}
  \label{tab:simconfigs}
  \small
  \begin{tabular}{@{}llp{0.60\textwidth}@{}}
    \toprule
    \multicolumn{3}{@{}l}{\textit{(a) Factorial designs}: all $2\times2\times2$ combinations of the factors below}\\
    \midrule
    Cluster centers & $K=3$ & $(\mu_0,\mu)=(\tfrac{7}{8}\pi,0),\ (\tfrac{5}{8}\pi,2),\ (\tfrac{3}{8}\pi,4)$\newline
                              angular means $157.5^{\circ},112.5^{\circ},67.5^{\circ}$: a $90^{\circ}$ arc away from the junction\\
    \addlinespace
                    & $K=6$ & $(\mu_0,\mu)=(\tfrac{13}{12}\pi,0),\ (\tfrac{9}{12}\pi,2),\ (\tfrac{5}{12}\pi,4),$\newline
                              $\phantom{(\mu_0,\mu)=}(\tfrac{1}{12}\pi,4),\ (\tfrac{21}{12}\pi,2),\ (\tfrac{17}{12}\pi,0)$\newline
                              angular means equispaced $60^{\circ}$ apart from $15^{\circ}$; each linear level shared by two components\\
    \addlinespace
    Angular dispersion & \texttt{L} & $\kappa=10$ \quad (concentrated)\\
                       & \texttt{H} & $\kappa=0.5$ \quad (nearly uniform on $S^1$)\\
    \addlinespace
    Linear variance & \texttt{L} & $\sigma^{2}=0.1$\\
                    & \texttt{H} & $\sigma^{2}=2$\\
    \addlinespace
    Angular--linear correlation & --- & $\rho_1=\rho_2=0$\\
    \midrule
    \multicolumn{3}{@{}l}{\textit{(b) Designs outside the factorial}}\\
    \midrule
    \texttt{COR3} & $K=3$ & $\mu_0=(0,\tfrac{1}{3}\pi,\tfrac{11}{6}\pi)$,\quad $\mu=(0,0.5,0.5)$,\quad $\kappa=10$,\quad $\sigma^{2}=0.5$\newline
                            $(\rho_1,\rho_2)=(0.60,0.75),\,(-0.68,-0.68),\,(-0.75,-0.60)$\\
    \addlinespace
    \texttt{COR6} & $K=6$ & $\mu_0$ equispaced $60^{\circ}$ apart from $0$,\quad $\mu=(0,1,0,1,0,1)$,\quad $\kappa=8$,\quad $\sigma^{2}=0.6$\newline
                            $(\rho_1,\rho_2)=(0.60,0.75),\,(0.68,0.68),\,(0.75,0.60)$, cycled over the six components\\
    \bottomrule
  \end{tabular}
\end{table}

The eight factorial designs have uncorrelated coordinates, \ie $\rho_{1,k} = \rho_{2,k} = 0$ for all $k$,\footnote{In unreported results, we also tested moderate and high correlation values, with $\rho_{1,k} = \rho_{2,k}$ and $\sqrt{\rho_{1,k}^2 + \rho_{2,k}^2} \in \{0.5, 0.9\}$, and found results qualitatively similar to those of the uncorrelated designs. Thus, only the latter are reported in the main text to ease the discussion.} and are labeled by a three-character code: the first letter is the angular dispersion (\texttt{L}: $\kappa = 10$, concentrated; \texttt{H}: $\kappa = 0.5$, nearly uniform on $S^1$), the second the linear variance (\texttt{L}: $\sigma^2 = 0.1$; \texttt{H}: $\sigma^2 = 2$), and the last character is $K$.
The cluster centers are fixed for each value of $K$ and are chosen so that the role of the circular coordinate differs sharply between the two cases.
With $K=3$, the groups are separated along both coordinates and occupy a $90^{\circ}$ arc that stays away from the $0/2\pi$ junction, so that, at low angular dispersion, the periodicity of $\theta$ is inactive by construction.
With $K=6$, the angular means are equispaced $60^{\circ}$ apart starting at $15^{\circ}$, so that some clusters straddle the junction, and each linear mean is shared by two components, so that the linear coordinate alone cannot complete the partition and the angle must be used.
The two designs of panel~(b), \texttt{COR3} and \texttt{COR6}, place the clusters at intermediate dispersion and endow them with a strong, cluster-specific angular--linear correlation, $\sqrt{\rho_{1,k}^{2}+\rho_{2,k}^{2}}\approx0.96$, with partially overlapping linear means: the groups are markedly elongated along distinct, cluster-dependent directions and genuinely interleave on the cylinder, and their shape is driven by the correlation rather than by the dispersion regime.

Figure~\ref{fig:simdata} shows one dataset per design and makes the intended contrasts visible.
The \texttt{LL} designs are the well-separated ones: \texttt{LL3} is trivial, whereas \texttt{LL6} is the only factorial design in which the circular topology is both active and recoverable, so that the treatment of the periodicity is the sole difference left between competing criteria.
In \texttt{HL3} the angle is nearly pure noise while the linear coordinate separates the groups on its own, so the design measures how much weight each criterion concedes to an uninformative angle; in \texttt{HL6} the shared linear levels remove that separation and the dispersed angle becomes indispensable.
The \texttt{LH} designs are the mirror image, an informative angle blurred by a noisy linear coordinate, and the \texttt{HH} designs are noise-dominated in both coordinates.
\texttt{COR3} and \texttt{COR6}, finally, are the only designs whose cluster shape violates the isotropy implicit in the squared-distance criteria.
From each design we generate $100$ independent datasets, and Section~\ref{subsec:sim_results} aggregates the results over the replicates.

\begin{figure}[!htpb]
  \centering
  \includegraphics[width=0.9\textwidth]{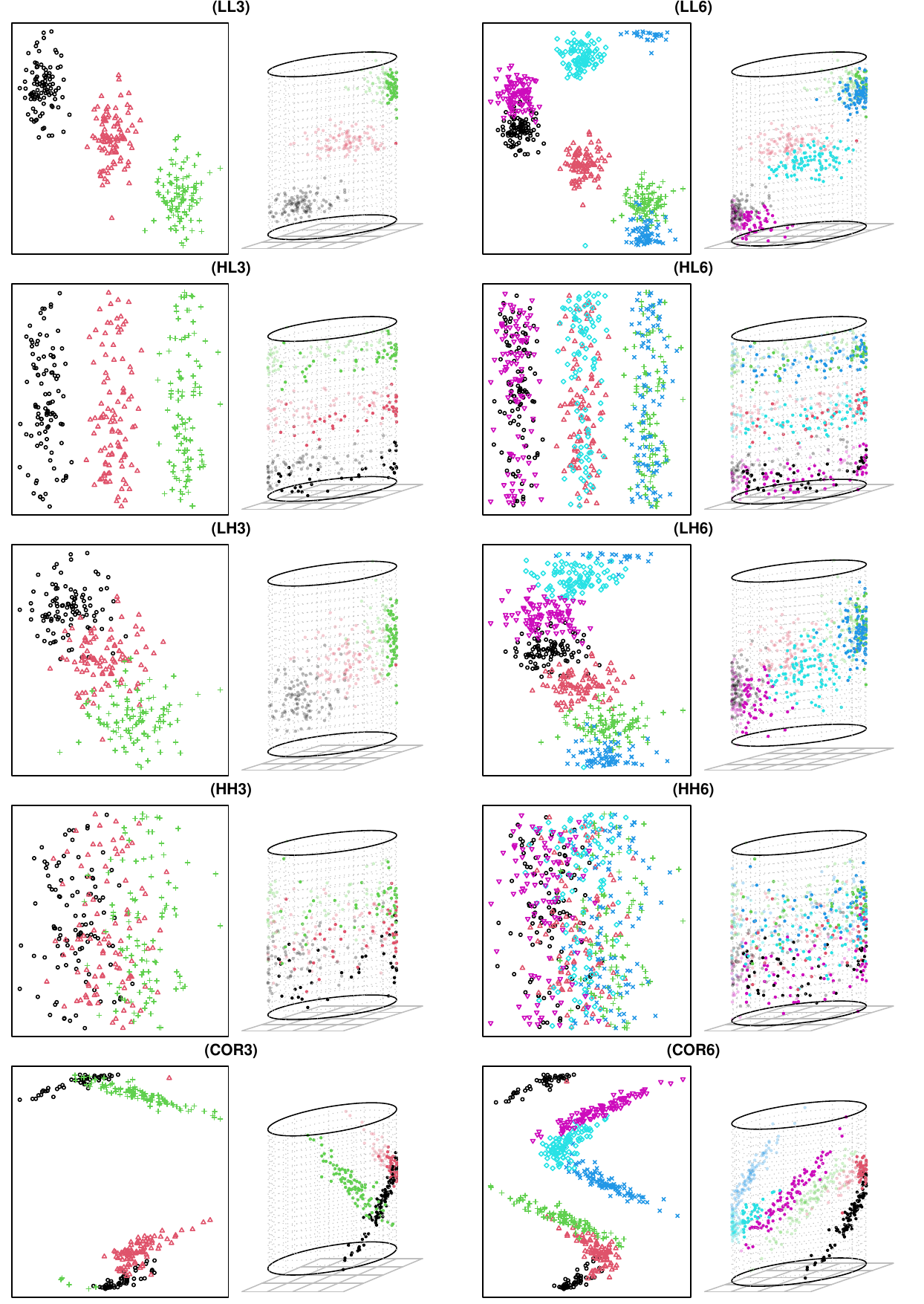}
  \caption{One simulated dataset per design. Each pair of panels shows the same
    dataset as a planar scatter of the linear coordinate against the angular one
    (left) and on the lateral surface of the cylinder (right), with points on the far
    side of the cylinder shaded. Colors and symbols mark the true
    clusters.\label{fig:simdata}}
\end{figure}

\subsection{Methods under comparison}\label{subsec:sim_methods}
We consider four clustering methodologies: \euckmeans, \chordkmeans, \geokmeans and the model-based mixture model introduced in~\cite{sengupta2020model}, \sengupta.

\begin{itemize}
  \item The \euckmeans is the standard Lloyd k-means algorithm~\citep{Lloyd1957}, where the angular and linear components are treated as Euclidean coordinates, the periodicity of \(\theta\) being ignored altogether.
  \item The \chordkmeans is Algorithm~\ref{alg:geodesic_kmeans} with the chordal distance \(d_{\mathrm{chord}}\) of~\eqref{eq:dchord}, the angular centroid being the mean direction~\eqref{eq:centroid_meandir}.
  \item The \geokmeans is Algorithm~\ref{alg:geodesic_kmeans} with the geodesic distance \(d_{\mathrm{geo}}\) of~\eqref{eq:dgeo}, the angular centroid being the Fr\'echet mean of the arc~\eqref{eq:centroid_geo}, computed by Algorithm~\ref{alg:frechet_circular}. It differs from \chordkmeans in the metric alone.
  \item Finally, \sengupta implements Model~1 from~\cite{sengupta2020model}, and it is a model-based mixture using Mardia-Sutton density to model mixture components. The model's parameters are estimated via an Expectation Maximization algorithm, initialized at random.
\end{itemize}

This panel is kept fixed throughout the paper: the applications of Section~\ref{sec:images} and Section~\ref{sec:case1} compare the same four methods, the former adding an application-specific baseline.

The k-means based algorithms are initialized with the popular ``k-means++'' initialization, introduced in~\cite{arthur2006k}, each under its own distance.
Each method is restarted independently five times with at most \(300\) iterations, and the best solution in terms of the corresponding cost function (or the log-likelihood, for the model-based method) is retained.%
%\footnote{The iteration cap is immaterial for the three \(K\)-means variants, whose objective is unchanged at \(100\) and at \(300\) iterations, and binds only on the EM recursion of \sengupta: in the concentrated designs (\(\kappa = 10\)) the latter is still at the cap at every \(K\), and remains so at \(1000\) iterations for all but one value of \(K\). The fits of \sengupta at large \(K\) are therefore, if anything, slightly pessimistic.}

For each data set, we fit the clustering method fixing the true number of clusters.
The solutions are assessed measuring their adjusted Rand Index~(ARI; \cite{Rand1971JotASA}) against the true partition.

\subsection{Discussion of the results}\label{subsec:sim_results}
Table~\ref{tab:ari} reports the ARI averaged over the $100$ Monte Carlo replications of each design, with its standard deviation, and Figure~\ref{fig:ari_boxplot} the corresponding box plots, which make the variability of the solutions visible.%
%\footnote{The eight factorial designs were also generated at two positive correlation levels, $\rho = 0.5$ and $\rho = 0.9$ with $\rho_1 = \rho_2 = \rho/\sqrt{2}$: the ranking of the methods is unchanged in every cell up to differences within Monte Carlo error, the only movements being the rise of \sengupta in \texttt{HH3} (from $0.118$ to $0.252$, with a standard deviation five times that of the $K$-means variants) and the erosion, never a reversal, of the advantage of \chordkmeans in \texttt{HL3} ($0.805$ to $0.598$); those replicates are therefore not reported.}
We discuss first the three designs singled out in Section~\ref{subsec:dgp_description}, then the remaining ones.

\begin{figure}[p]
	\centering
  \includegraphics[width=0.9\textwidth]{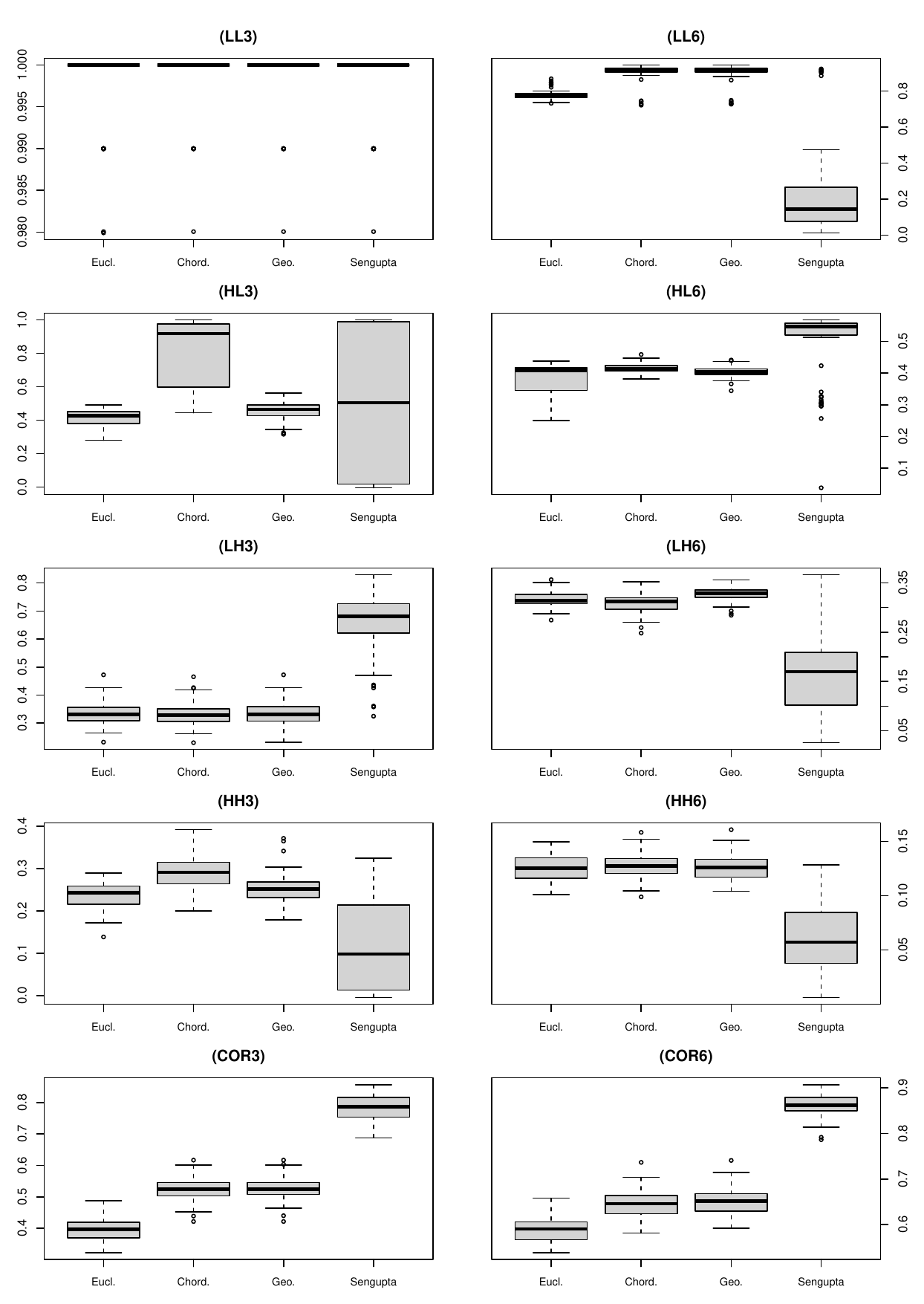}
	\caption{Distribution of the adjusted Rand Index over the $100$ Monte Carlo
    replications of each design, by method. Panels follow the order of
    Table~\ref{tab:ari}: designs with $K=3$ on the left, $K=6$ on the
    right.\label{fig:ari_boxplot}}
\end{figure}

\begin{table}
  \centering
  \caption{Average ARI over 100 Monte Carlo replications (standard deviation in
    parentheses), at zero angular--linear correlation. Design codes as in
    Table~\ref{tab:simconfigs}; the two \texttt{COR} rows are the correlated designs of
    its panel~(b). The best result in each row is in bold.}\label{tab:ari}
  \begin{tabular}{@{}lcccc@{}}
    \toprule
    Design & Eucl. K-means++ & Chord. K-means++ & Geo. K-means++ & Sengupta et al.\\
    \midrule
    \texttt{LL3}   & 0.998 (0.004) & \textbf{0.999} (0.003) & \textbf{0.999} (0.003) & 0.999 (0.004)\\
    \texttt{LL6}   & 0.779 (0.026) & \textbf{0.909} (0.043) & 0.908 (0.042) & 0.213 (0.219)\\
    \midrule
    \texttt{HL3}   & 0.415 (0.048) & \textbf{0.805} (0.204) & 0.456 (0.057) & 0.468 (0.393)\\
    \texttt{HL6}   & 0.384 (0.047) & 0.415 (0.013) & 0.403 (0.016) & \textbf{0.506} (0.097)\\
    \midrule
    \texttt{LH3}   & 0.335 (0.042) & 0.330 (0.042) & 0.335 (0.042) & \textbf{0.659} (0.099)\\
    \texttt{LH6}   & 0.317 (0.014) & 0.309 (0.018) & \textbf{0.328} (0.014) & 0.163 (0.075)\\
    \midrule
    \texttt{HH3}   & 0.238 (0.028) & \textbf{0.290} (0.036) & 0.251 (0.031) & 0.118 (0.103)\\
    \texttt{HH6}   & 0.125 (0.011) & \textbf{0.128} (0.011) & 0.126 (0.011) & 0.061 (0.029)\\
    \midrule
    \texttt{COR3}  & 0.397 (0.034) & 0.526 (0.035) & 0.527 (0.034) & \textbf{0.783} (0.041)\\
    \texttt{COR6}  & 0.589 (0.027) & 0.645 (0.030) & 0.651 (0.030) & \textbf{0.861} (0.024)\\
    \bottomrule
  \end{tabular}
\end{table}

On \texttt{LL6}, the two cylindrical variants recover the partition almost completely ($0.909$ and $0.908$), while \euckmeans stops at $0.779$.
Since the three procedures differ only in the treatment of the angular component, the gap is the cost of ignoring the periodicity, and it appears as soon as clusters straddle the $0/2\pi$ junction.
That \chordkmeans and \geokmeans are indistinguishable is equally informative: at $\kappa = 10$ the within-cluster angular differences are small, arc and chord agree to second order (Remark~\ref{remark:arc_vs_chord}), and the choice of the metric is immaterial.
\sengupta collapses to $0.213$ despite the favorable separation, with the EM recursion frequently trapped in poor local optima.

In the \texttt{HL3} design, the angle bears no information, and the linear means $0, 2, 4$ identify the groups on their own.
\euckmeans drops to $0.415$: near the junction the unwrapped angular differences reach $(2\pi)^2$ and corrupt the assignment.
The wrapping of \geokmeans removes that artefact but not the noise: the squared arc still ranges over $[0, \pi^2]$, up to two and a half times the squared gap of $4$ between adjacent linear centers, and \geokmeans improves only to $0.456$.
The squared chordal distance, used by \chordkmeans, saturates at $4$ (exactly the squared linear gap) and the angular noise no longer override the linear separation. \chordkmeans achieves the best ARI at $0.805$ (its boxplot is nonetheless wide).
Here, the under-penalization of distant points of the chordal distance turns into an accidental form of robustness to the uninformative angular coordinate.
\sengupta, whose down-weighting is explicit and adaptive, is unreliable here: the EM search either finds the optimum or misses it altogether, with replications either close to $1$ or close to $0$ (mean $0.468$, standard deviation $0.393$).

For the elongated and correlated clusters \texttt{COR3} and \texttt{COR6}, no squared-distance criterion can follow a cluster-specific direction, and the three $K$-means variants stay between $0.40$ and $0.65$, the two cylindrical ones again indistinguishable ($0.527$ and $0.526$ at $K = 3$, $0.651$ and $0.645$ at $K = 6$) and clearly ahead of \euckmeans ($0.397$ and $0.589$).
\sengupta, which fits a separate conditional linear mean $\mu_{c,k}$ to each component, wins decisively ($0.783$ and $0.861$) and stably.

The other six designs tell a common story. 
The three $K$-means variants yield similar results, suggesting that the designs bear no mechanism able to distinguish them, unlike the previous settings; \sengupta is either the best or the worst, and almost always the most dispersed.
In \texttt{LL3} every method is essentially perfect ($0.998$--$0.999$).
In \texttt{HL6}, the shared linear levels force every method to rely on a dispersed angle: the three $K$-means achieve a poor recovery ($0.384$--$0.415$) and \sengupta is ahead ($0.506$) but with a long lower tail.
In \texttt{LH3} the noisy linear coordinate dominates the squared distance and the three variants collapse together at $0.33$, the concentrated angle keeping the metrics in their small-angle regime; \sengupta, which estimates a separate dispersion for each coordinate, down-weights the linear one and recovers the structure ($0.659$), at the price of a wider spread and outliers down to $0.3$.
At $K = 6$ (\texttt{LH6}) the same likelihood becomes multimodal, \sengupta falls to $0.163$ with replications from $0$ to $0.35$, and the $K$-means variants regain a modest lead, \geokmeans marginally ahead ($0.328$ against $0.317$ and $0.309$).
The \texttt{HH} designs leave little to recover, and every method approaches chance, while \sengupta is the weakest and the most variable.

Overall, the geodesic metric is the exact geometry of the cylinder and remains the default choice; the chordal one overtakes it only when the angle carries no information, its cap acting as an implicit, fixed down-weighting of the angular coordinate.
That advantage is a matter of scale rather than of geometry.
\sengupta, whose failures are ones of the EM search rather than of the model is erratic in most of them, but on the elongated, cluster-specific shapes that violate the isotropy of any squared-distance criterion (\texttt{COR}) it wins decisively and stably.

\section{Application to color image quantization}\label{sec:images}

Color quantization reduces an image to a palette of $K$ colors: the pixels are
grouped, and every pixel is repainted with the representative color of its group.
It is a clustering problem in color space, and a natural test bed for cylindrical
methods, because the hue--saturation--value (HSV) representation of color
\citep{smith1978color} places the hue on a circle.

\subsection{Data}\label{subsec:images_data}

The images come from the Berkeley Segmentation Data Set BSDS300
\citep{martin2001database}, a standard collection of $300$ natural color photographs
of $481\times321$ pixels ($n = 154{,}401$ points per image).
The same data set was used to evaluate the cylindrical clustering method proposed by \cite{fujita2017clustering}.
We select six images so as to span different chromatic structures.
Write $\theta_j = 2\pi H_j$ for the hue of pixel $j$ read as an angle and $x_j = V_j$ for its value, so that a pixel is the point $(\theta_j, x_j)$ on the cylinder.
Table~\ref{tab:images} characterizes the six images through four summaries of that representation, which are informative to compare the clustering methods.

The circular variance of the hue, $1-\bar R$ with $\bar R$ the mean resultant length
of Section~\ref{subsec:method}, says whether the hue carries in any information at all.\footnote{$\bar{R}$ is computed as the resultant length $R_k$ defined in Section~\ref{subsec:method} divided by the number of pixels, so that it falls in $[0,1]$. $\bar R = 1$ when every pixel shares one hue, and $\bar R = 0$ when the hues are spread evenly round the circle.}
The ``wrap ratio'', the linear variance of the hue over twice its circular variance, is
inflated whenever a compact group of hues straddles the $0/2\pi$ cut, and so says in
advance on which images treating the hue as periodic can matter.\footnote{The wrap
  ratio is not a standard summary of circular data; we use it only as a quick way of
  locating where the periodicity is active. Expanding about the mean direction gives
  $1-\bar R \simeq \operatorname{var}(\theta)/2$ for concentrated hues, so that
  $\operatorname{var}(\theta)/\{2(1-\bar R)\}$ sits near $1$ whenever no group of hues
  crosses the cut, and departs from it when one does. Five of the six images of
  Table~\ref{tab:images} lie between $1.03$ and $1.21$; \texttt{124084} is at
  $10.37$.} The $R^2$
of the regression of the value on $(\cos\theta,\sin\theta)$ measures the hue--value
dependence that \sengupta is built to exploit. 
The fraction of nearly achromatic pixels ($\mathrm{Saturation}<0.15$) locates where the hue is ill-defined, the saturation being the distance of a color from the achromatic axis, and the cylindrical representation noisy by construction.

The set thus contains images whose hues are widely dispersed
(\texttt{118035}, \texttt{299091}) and others concentrated in a narrow arc
(\texttt{25098}, \texttt{12003}, \texttt{113044}), one image whose dominant hue sits
on the cut (\texttt{124084}, wrap ratio $10.37$ against at most $1.21$ for every other
image), and images spanning the whole range of hue--value coupling, from
\texttt{118035} ($R^2 = 0.78$) to \texttt{113044} ($R^2 = 0.03$).

\begin{table}[!htpb]
  \centering
  \caption{The six BSDS300 images and their chromatic summaries in the hue--value
    representation: circular variance of the hue, wrap ratio
    $\operatorname{var}(\theta)/\{2(1-\bar R)\}$, $R^2$ of the value on
    $(\cos\theta,\sin\theta)$, and fraction of pixels with saturation below
    $0.15$.}\label{tab:images}
  \footnotesize
  \begin{tabular}{@{}llcccc@{}}
    \toprule
    Image & Subject & Circ.\ var. & Wrap ratio & $R^2$ & Low-$S$ frac.\\
    \midrule
    \texttt{118035} & church, red dome against a blue sky       & 0.80 &  1.21 & 0.78 & 0.18\\
    \texttt{25098}  & market stall: red peppers, wooden crates  & 0.06 &  1.03 & 0.26 & 0.02\\
    \texttt{124084} & red flowers with yellow centers on leaves & 0.17 & 10.37 & 0.32 & 0.00\\
    \texttt{299091} & pyramid and desert under a clouded sky    & 0.35 &  1.13 & 0.40 & 0.00\\
    \texttt{113044} & chestnut mare and foal in a green field   & 0.15 &  1.09 & 0.03 & 0.00\\
    \texttt{12003}  & orange sea star on green algae            & 0.12 &  1.06 & 0.43 & 0.01\\
    \bottomrule
  \end{tabular}
\end{table}

\subsection{Methodology}\label{subsec:images_methods}

Each pixel is an RGB triplet with channels in $[0,1]$, which we convert to HSV. The
hue $H\in[0,1)$ is an angle, and the value $V\in[0,1]$ a linear coordinate, so that
the pair
\begin{equation*}
  (\theta, x) = (2\pi H,\ V) \in S^1\times[0,1]
\end{equation*}
is a point on the lateral surface of the cylinder. The saturation, which only
measures how far a color lies from the achromatic axis, takes no part in the
cylindrical clusterings and is reinstated at painting time: every pixel of cluster $k$
is repainted with the color $(H_k, S_k, V_k)$, where $H_k$ and $V_k$ are read off the
cluster centroid $(\bar\theta_k, \bar x_k)$ and $S_k$ is the mean saturation of the
pixels in the cluster. 
Every method therefore paints exactly $K$ whole colors.
For \sengupta, whose components have no centroid, we take the angular mean $\mu_k$ of the component and its conditional linear mean at that angle.
%, $b_{0k}+b_{1k}\cos\mu_k+b_{2k}\sin\mu_k$.

We compare the four methods of Section~\ref{subsec:sim_methods} -- \euckmeans,
\chordkmeans, \geokmeans and \sengupta, all run on $(\theta,x)$ -- and a fifth one,
\rgbkmeans, which is the standard $k$-means++ applied to the raw RGB triplets, which ignores the circular structure of the hue but retains all three channels.
The search budget is the same for every method: $k$-means++ seeding with
$50$ restarts and at most $300$ iterations for the four $K$-means variants, $50$
random initializations and $300$ EM iterations for \sengupta, the best solution in
terms of the method's own criterion being retained. We fix $K=3$ for all methods.

\subsection{Results}\label{subsec:images_results}

Figure~\ref{fig:images} shows, for each image, the original and the five quantized
versions. Three comparisons can be read off it.

\begin{figure}[p]
  \centering
  \includegraphics[width=0.96\textwidth]{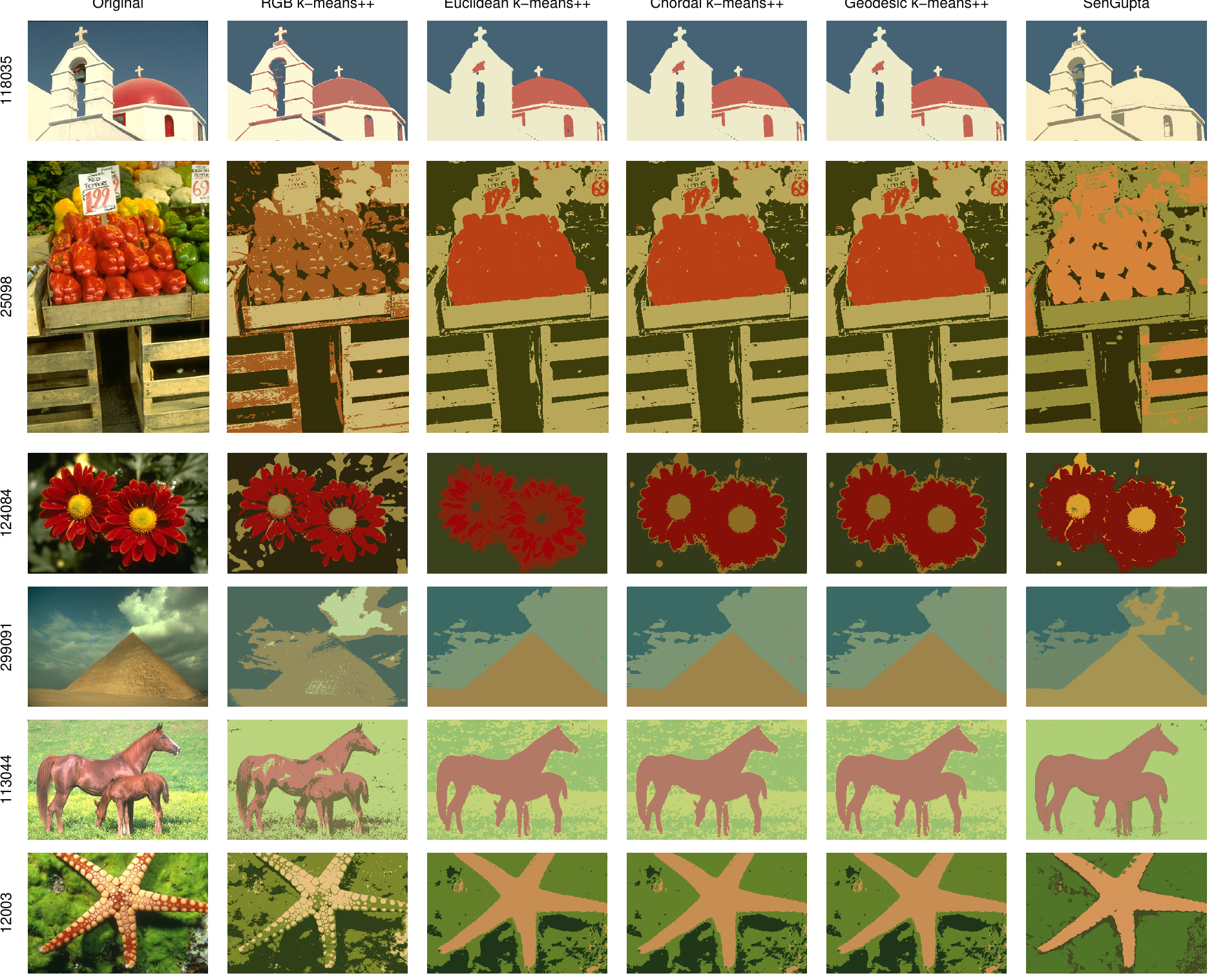}
  \caption{Color quantization at $K=3$. The first column shows the original BSDS300
    image; the remaining columns the image repainted with the three colors found by
    each method. The four cylindrical methods work on (hue, value); \rgbkmeans on the
    raw RGB triplets.\label{fig:images}}
\end{figure}

The first is between \euckmeans and the two cylindrical $K$-means, which differ only
in whether the hue is treated as periodic. 
On the five images whose hues stay away from the $0/2\pi$ cut (\texttt{118035}, \texttt{25098}, \texttt{299091}, \texttt{113044}, \texttt{12003}) the three quantizations are indistinguishable, as expected.
In \texttt{124084}, whose dominant hue sits on the cut (wrap ratio $10.37$ against at most $1.21$ for the others), \euckmeans splits in two reds, so that the yellow centers are lost and confounded with the background, in contrast with \chordkmeans and \geokmeans returning the natural palette: petals, centers, leaves. 
This is the image-space counterpart of the \texttt{LL6} design of Section~\ref{sec:simulations}.

The second comparison is between \chordkmeans and \geokmeans, which are
indistinguishable on all six images. The within-cluster hue dispersion of a
quantized image is small by construction, so the chord and the arc agree to second
order and the two metrics yield the same partition, as in the concentrated
designs of the simulation study.

The third comparison is with the two methods that do not work on the cylinder in the
same way. \rgbkmeans spends its three colors on luminance rather than on hue
(\texttt{25098}, where the peppers come out a dull orange).
As a result, an object lit unevenly is cut in two (\texttt{113044}, \texttt{12003}).
In contrast, the cylindrical methods keep each of these objects whole across the same range of illumination, because the hue of a surface is unchanged by how brightly it is lit.
The price is a visibly coarser reconstruction, since \rgbkmeans minimizes exactly the squared error between the quantized and the original image.
Moreover, where two distinct objects happen to share a hue, no cylindrical variant can separate them, as visible in image \texttt{299091} where the pyramid and the desert fall in one cluster.

\sengupta, finally, behaves as in the simulation study, where it is either the best
method or the worst. It gives the smoothest rendering of \texttt{113044}, the field
reduced to a single green against which the two animals stand out most sharply, and
it is on a par with the cylindrical $K$-means on \texttt{124084}, whose groups are
large and well populated. But it absorbs small, chromatically distinct regions into
the dominant component: the red dome of \texttt{118035} disappears into the white
walls of the church, and the peppers of \texttt{25098} break up. 
%Its Model~1 components share the angular concentration and the conditional linear variance, and describe the value as a sinusoidal function of the hue, so a small region with a distinct hue but an ordinary value is cheaper to explain as the tail of a wide component than as a component of its own.

\section{Application to wind direction and \texorpdfstring{\nox}{NOx} concentration}\label{sec:case1}

In this section, we analyze an application of the proposed methodology to the joint behavior of wind direction and pollutant concentration, which naturally have a cylindrical structure \citep{garcia2013exploring}: the direction is an angle, the
concentration a positive linear quantity, and the two are coupled by the geography of
the emission sources around the monitoring site.

\subsection{Data}\label{subsec:case1_data}

We use the \texttt{mydata} set distributed with the \texttt{openair} package
\citep{carslaw2012openair}: hourly measurements at the Marylebone Road kerbside
station in central London from 1 January 1998 to 23 June 2005.
It records wind direction and speed together with the concentrations of six
pollutants: nitrogen oxides (\nox), nitrogen dioxide alone (NO\textsubscript{2}), ozone (O\textsubscript{3}), particulate matter under $10\,\mu\mathrm{m}$ in diameter (PM\textsubscript{10}),
sulphur dioxide (SO\textsubscript{2}) and carbon monoxide (CO). We take the wind
direction as the angular coordinate and the \nox\ concentration, in parts per billion
by volume (ppb), as the linear one: it is the pollutant with the strongest
directional signal at this site (the $R^2$ of the concentration on
$(\cos\theta,\sin\theta)$ is $0.39$, against $0.12$--$0.34$ for the others).
Hourly readings are strongly serially dependent, so we retain one observation per day, the
12:00 reading; calms, whose direction is undefined, and incomplete records are
discarded. This leaves $n=2540$ daily observations. Wind direction is reported at a
$10^{\circ}$ resolution, $360^{\circ}$ being folded onto $0^{\circ}$; the \nox\
concentration is markedly right-skewed (median $188$, mean $210$, standard deviation
$134$, maximum $1092$ ppb).

Figure~\ref{fig:case1_data} shows the data through a cylindrical kernel density
estimate, the product of a von Mises kernel on the direction (concentration
$\kappa=15$) and a Gaussian kernel on the concentration, so that mass is not split at
the $0/360^{\circ}$ seam as it would be by a planar density. Two regimes are visible.
Winds from the south-western quadrant, roughly $180^{\circ}$--$270^{\circ}$, bring
high and widely dispersed concentrations, with the mode above $250$ ppb; winds from
around North bring low concentrations, with a tight mode below $100$ ppb. The second
regime spans both sides of the cut, from about $300^{\circ}$ to about $60^{\circ}$:
the unrolled view of panel~(a) splits it between its two edges, whereas the periodic
density and the top-down view of panel~(b) show it as a single mode.

\begin{figure}[!htpb]
  \centering
  \includegraphics[width=\textwidth]{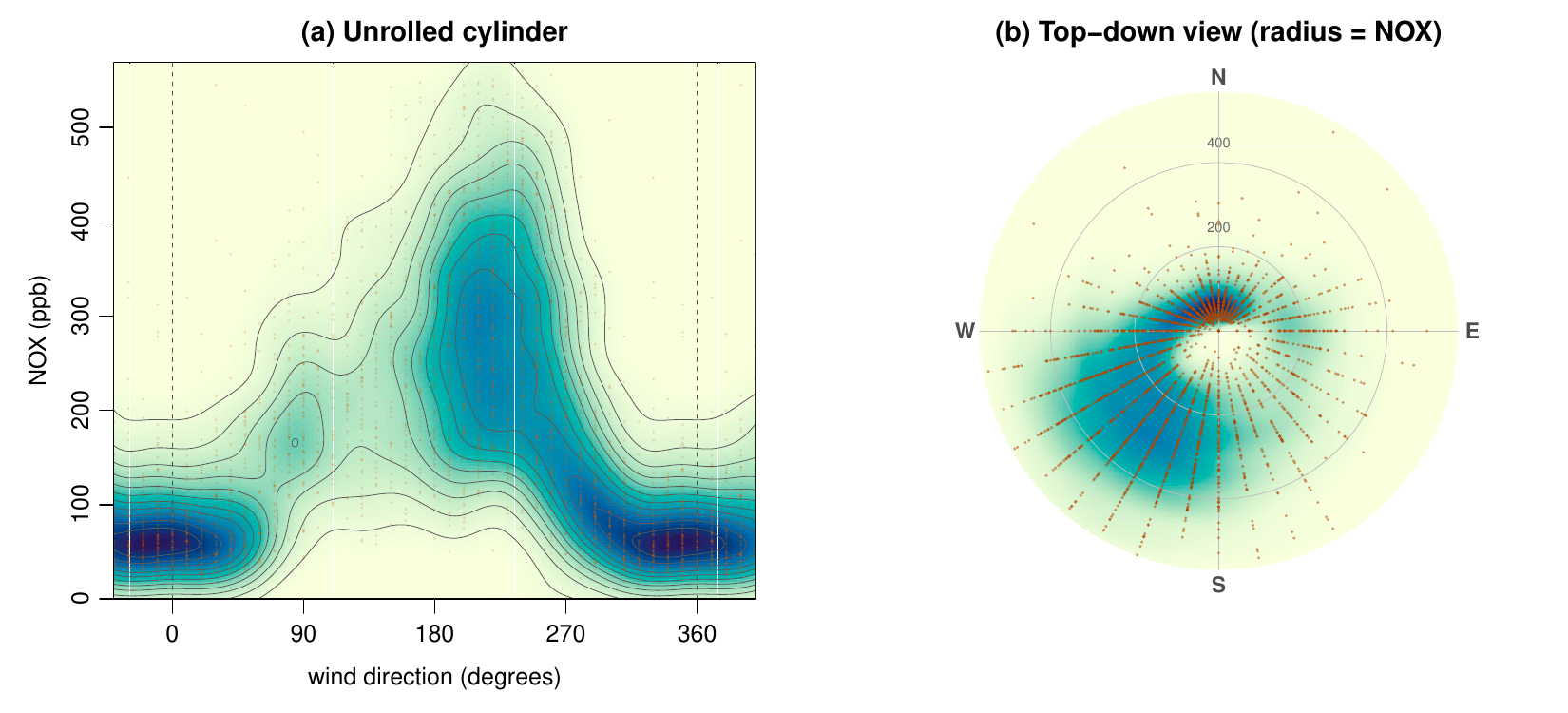}
  \caption{Wind direction and \nox\ concentration at Marylebone Road ($n=2540$ daily
    observations at 12:00), through the cylindrical (von Mises $\times$ Gaussian)
    kernel density estimate with the data overlaid. (a) Unrolled cylinder, shown just
    beyond one full turn so that the periodicity is visible; the dashed lines mark one
    period. (b) Top-down view, with the direction as compass bearing and the
    concentration as radius. The display is capped at the 99th percentile of \nox\
    ($569$ ppb).\label{fig:case1_data}}
\end{figure}

\subsection{Methodology}\label{subsec:case1_methods}

The two coordinates must be made commensurable before a distance can be computed on
them. We standardize the concentration by its sample mean $\bar z$ and standard
deviation $s_z$, and set the radius of the cylinder to one, so
that the linear coordinate is $x = (z-\bar z)/s_z$ and the squared geodesic distance
is $\big(\delta^{\mathrm{geo}}\big)^2 + (x_i-x_j)^2$; a different radius $r$ amounts to rescaling $x$ by
$1/r$.\footnote{The choice of $r$ is a modeling decision, not an estimate. We swept
$r$ over $\{0.25, 0.5, 1, 2, 4\}$, rescoring each $K$-means variant under the distance
it minimizes at that radius: the mean silhouette width is maximal at $K=2$ in eleven
of the twelve curves, so the choice of $K$ made below does not depend on the radius.
The partition itself does: as $r$ grows the linear coordinate shrinks and the split
turns from one by concentration into one by sector, abruptly for \euckmeans and
gradually for the two cylindrical variants. \sengupta, which estimates its own linear
variance, is invariant to $r$ by construction. We report $r=1$ throughout; the sweep
is available on request.} We compare the four
methods of Section~\ref{subsec:sim_methods}, with the same search budget as in
Section~\ref{sec:images}: $50$ $k$-means++ restarts and at most $300$ iterations for the $K$-means variants, $50$ random initializations and $300$ EM iterations for
\sengupta.

The number of clusters is dictated by the question: whether the wind separates the
site into a polluted and a clean regime, that is, $K=2$.
Nevertheless, the average silhouette width~\citep{rousseeuw1987silhouettes} is maximal at $K=2$ for the three $K$-means methods (each computed using the method's own distance) for
$K\in\{2,\ldots,6\}$. 
For \sengupta, $K$ is selected via the Bayesian Information Criterion (BIC,~\cite{schwarz1978estimating}), which here decreases monotonically over the same range and would therefore select $K=6$.
To have a fair comparison with the other methods, we also fix $K=2$ for \sengupta.
The diagnostics and the partitions at every $K$ are reported in
Appendix~\ref{sec:case1_diagnostics}. 

\subsection{Results}\label{subsec:case1_results}

Figure~\ref{fig:case1_clusters} shows the $K=2$ partitions, in the unrolled view and
as clustered pollution roses. \chordkmeans and \geokmeans agree almost exactly (ARI
$0.887$ between them) and recover the two regimes read off the density: a polluted
sector, centered at $212^{\circ}$ and holding $1500$ days with a mean concentration of
$284$ ppb, and a clean sector, centered at $357^{\circ}$ and holding $1040$ days with a
mean concentration of $103$ ppb. The clean sector straddles the cut and is returned as
one cluster.

\begin{figure}[!htpb]
  \centering
  \includegraphics[width=\textwidth]{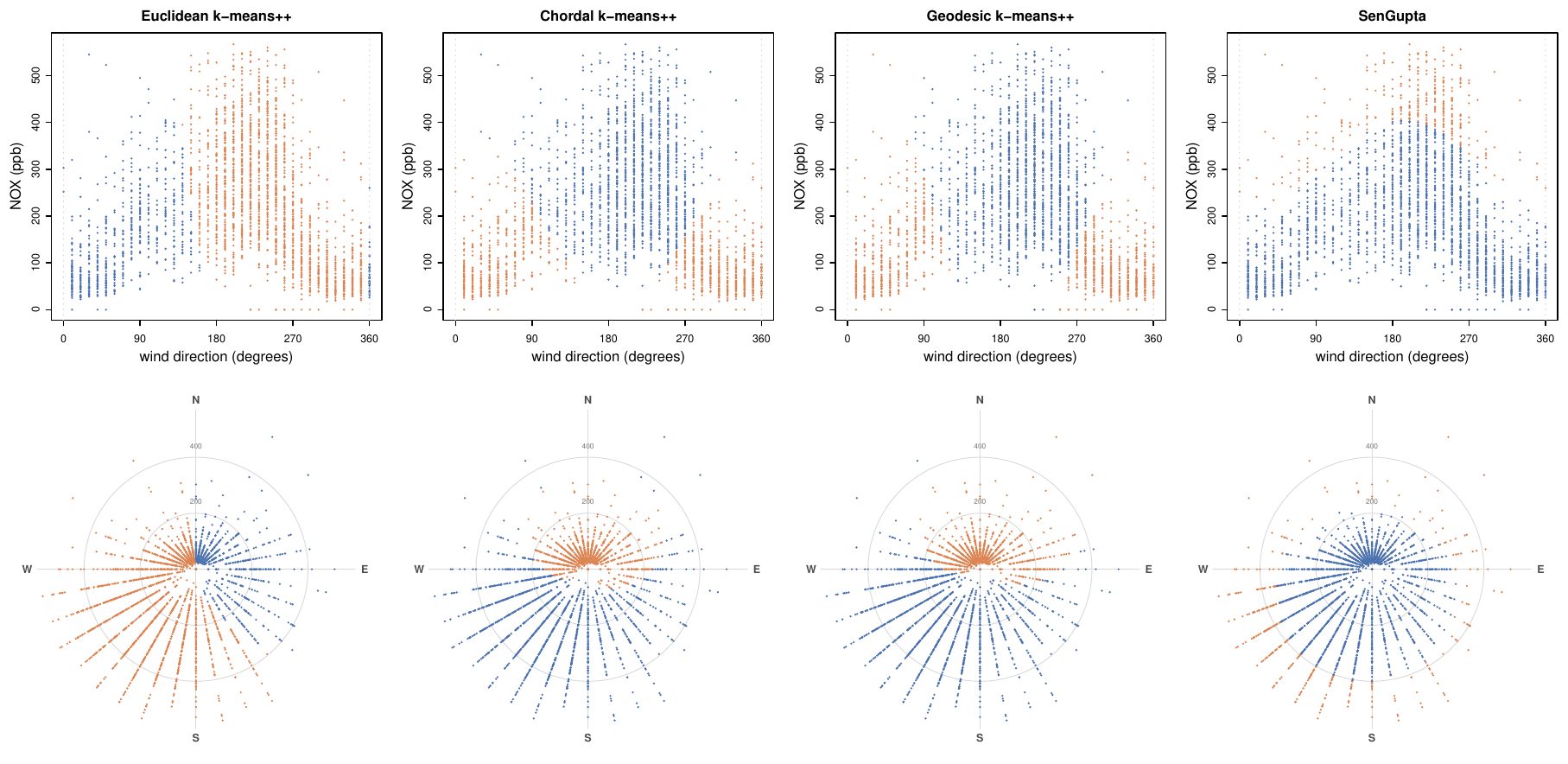}
  \caption{The $K=2$ partitions of the wind direction and \nox\ data, one method per
    column. Top: unrolled view, direction against concentration. Bottom: clustered
    pollution rose, with the direction as compass bearing and the concentration as
    radius. Colors mark the clusters; the display is capped at the 99th percentile of
    \nox.\label{fig:case1_clusters}}
\end{figure}

\euckmeans returns a different partition (ARI $0.21$ with \geokmeans). Unable to see
that $350^{\circ}$ and $10^{\circ}$ are neighbors, it places a boundary at the seam
and another at about $130^{\circ}$: its first cluster ($686$ days, centered at
$64^{\circ}$) collects the directions from North to East-South-East, and its second
($1854$ days) everything else, so that the north-westerly half of the clean regime is
lumped with the polluted south-westerly sector.

\sengupta returns yet another partition, unrelated to the other three (ARI below
$0.05$ with each). Its two components are not sectors but concentration levels: a
bulk component ($2197$ days, mean $174$ ppb, nearly uniform in direction) and a high-pollution component ($343$ days, mean $441$ ppb, centered at $219^{\circ}$).
The mixture is describing the right-skewed marginal
distribution of \nox\ rather than its dependence on the wind, and it keeps doing so
as $K$ grows: the partitions at $K=3,\ldots,6$ in Appendix~\ref{sec:case1_diagnostics}
are stacked concentration bands, which is what the monotonically decreasing BIC is
rewarding. On these data the model-based method answers a question about the shape of
the concentration distribution; the cylindrical $K$-means answer the one that was
asked.

\section{Conclusions}\label{sec:conclusions}

We have proposed $K$-means clustering for cylindrical data that respects the geometry
of $S^1\times\mathbb{R}$.
The construction combines the linear distance with either of the two classical
distances on the circle, the chord and the arc, and yields two algorithms,
\chordkmeans and \geokmeans, differing in the metric alone. The two are not
interchangeable: the chord saturates where the arc keeps growing (Remark~\ref{remark:arc_vs_chord}), so that they agree to second order when the within-cluster angular spread is small and part company when it is not.
For both we established the convergence of the algorithm (Section~\ref{subsec:convergence}) and its computational cost, the arc-length centroid being available exactly in $O(m\log m)$ per cluster through Algorithm~\ref{alg:frechet_circular}.
We then compared them against the naive Euclidean $K$-means and against the model-based
mixture of \cite{sengupta2020model} on ten simulated designs and on two applications.

The evidence shows that the structure cannot be neglected: \euckmeans
is never meaningfully better than the cylindrical variants and collapses as soon as a
cluster crosses the $0/2\pi$ junction. Between the two cylindrical variants the choice
is decided by whether the angle carries information. When it does, the geodesic
distance is the exact geometry and the natural default, while the chordal one
approximates it closely and more cheaply; when it does not, the saturation of the
chord acts as an unintended hedge against an uninformative coordinate, and
\chordkmeans wins for a reason that has nothing to do with geometry. 
The model-based clustering algorithm of \cite{sengupta2020model} is the
most variable method of the four, its EM search returning either the best or the worst
partition of a design.
We recommend using it whenever the clusters are expected to be elongated along
cluster-specific directions, the shapes its Mardia--Sutton components describe and that
no squared-distance criterion can follow, always comparing it to alternative
clustering methods.
The applications tell the same story: on color
quantization the cylindrical variants segment objects that the RGB baseline breaks
apart along the illumination, and on the wind direction and \nox\ data they recover
the two pollution regimes that \euckmeans cuts at the seam of the coordinate system.
The one question we leave open is the one that decides the balance between the two
coordinates. Their relative weight is not a property of the data but a modeling
choice --- in the applications it is the radius of the cylinder, and it is what makes
one coordinate dominate the partition --- and how to set it deserves a study of its
own.

% The CRediT and the funding statements name the authors: they are printed only in the
% version with author details (\anon = 1) and disappear from the anonymized one.
\if1\anon
  {
    \section{Author contributions}\label{sec:credit}

    CRediT: \textbf{Giuseppe Pandolfo}: Conceptualization, Methodology, Software, Validation, Formal analysis, Investigation, Data curation, Visualization, Writing -- original draft, Writing -- review \& editing. \textbf{Luca Coraggio}: Conceptualization, Methodology, Software, Validation, Formal analysis, Investigation, Data curation, Visualization, Writing -- original draft, Writing -- review \& editing. \textbf{Antonio D'Ambrosio}: Validation, Writing -- original draft, Writing -- review \& editing.

    \section{Funding}\label{sec:funding}

    Luca Coraggio acknowledges funding by the European Union -- \emph{NextGenerationEU}, in the framework of the \emph{GRINS -- Growing Resilient, INclusive and Sustainable} project (GRINS PE00000018 - CUP E63C22002140007). The views and opinions expressed are solely those of the authors and do not necessarily reflect those of the European Union, nor can the European Union be held responsible for them.
  } \fi

\section{Disclosure statement}\label{sec:disclosure}

The authors report there are no competing interests to declare.

\section{Declaration of generative AI use}\label{sec:ai_declaration}

% T&F policy: the statement must give "the full name of the tool used (with version
% number), how it was used and the reason for use". Complete the list of tools below.
The research question, the methodology, the design and the interpretation of the experiments and the original draft of the manuscript are the authors' own work. Generative AI tools were used as assistants in its preparation: Claude (Anthropic; model Claude Fable 5.1, Opus 5 and Sonnet 5). They were used for coding assistance in developing and checking the replication code and the scripts that produce the figures; for checking derivations; and for rewording and copyediting authors' own drafts and notes. No data, result or image analyzed in the paper was generated or altered by these tools: every numerical result is produced by the replication code, which the authors ran and verified. All AI-assisted text and code were rigorously reviewed and revised by the authors, who take full responsibility for the content of the article.

\section{Data Availability Statement}\label{sec:data_availability}

All three data sources used in this paper are public and none contains personal data.
The simulated designs of Section~\ref{sec:simulations} are generated by the
replication code from the Mardia--Sutton parameters of
Table~\ref{tab:simconfigs}, so that no data file needs to be distributed. The images
of Section~\ref{sec:images} are the six listed in Table~\ref{tab:images}, taken from
the Berkeley Segmentation Data Set BSDS300 \citep{martin2001database}, which is
distributed by the Computer Vision Group of the University of California, Berkeley.
The wind direction and \nox\ data of Section~\ref{sec:case1} are the \texttt{mydata}
set of the \texttt{openair} R package \citep{carslaw2012openair}, hourly measurements
at the Marylebone Road kerbside station in central London between 1 January 1998 and
23 June 2005, available from CRAN with the package and from its public repository.
The replication code obtains it automatically. Code reproducing every table and
figure of the paper, including the selection of the designs and of the images, is
provided as supplementary material to this submission and will be deposited in a
public repository upon acceptance.

\phantomsection\label{sec:supplementary}
\bigskip

\bibliography{bibliography}

\clearpage
\begin{center}
  {\large\bf SUPPLEMENTARY MATERIAL}
\end{center}

\begin{description}
  \item[Appendices:] Exact computation of the circular Fr\'echet mean, with proofs (Appendix~A); arc versus chord, a worked example (Appendix~B); full specification of the simulated designs (Appendix~C); model selection for the wind direction and \nox\ data (Appendix~D). They are reported below for the convenience of the reviewers. (pdf file)
  \item[Replication code and data:] R code reproducing every table and figure of the paper, together with the specification of the simulated designs, the list of the selected BSDS300 images and the scripts that retrieve the public data sets; a README file describes the workflow. (zip file)
\end{description}

\appendix
% statements are numbered sequentially in the body and per-section in the
% appendices, so one can be promoted or demoted without renumbering by hand
\counterwithin{proposition}{section}
\counterwithin{lemma}{section}

\section{Exact computation of the circular Fr\'echet mean}\label{sec:frechet}

This appendix gives a self-contained derivation of the angular centroid~\eqref{eq:centroid_geo}, the Fr\'echet mean of the arc length on the circle (the \emph{intrinsic mean} of the directional statistics literature). \citet{McKilliam2012} compute it in linear time by a different route, reducing the minimization to a nearest-lattice-point problem; \citet{HotzHuckemann2015} show that it lies among at most \(m\) candidates, the arithmetic means of the sample unwrapped in each of the ways compatible with a sorted order, and \citet{Cazals2021} extend the same construction to \(p\)-means and to exact arithmetic. Algorithm~\ref{alg:frechet_circular} is a different writing of the sweep over candidates of the last two works, expressed through the wrapping map introduced below; it computes the same quantity, and we state and prove its correctness here so that the paper can be read on its own.

\subsection{The wrapping map and the three-candidate rule}\label{subsec:wr}

Let \(\{\theta_i\}_{i=1}^m\) denote the angular coordinates of the points in a given cluster. For \(a \in \mathbb{R}\) define the \emph{wrapping map}
\[
\mathrm{wr}(a) = \big( (a + \pi) \bmod 2\pi \big) - \pi \in [-\pi, \pi),
\]
where ``\(x \bmod n\)'' denotes the unique representative of \(x\) in \([0, n)\), which is non-negative whatever the sign of \(x\). Whereas \(\delta^{\mathrm{geo}}\) returns the length of the shortest arc, \(\mathrm{wr}\) returns the \emph{signed} angular displacement that realizes it. The following lemma reconciles the two descriptions of the minimal angular difference used in the paper --- the minimization over \(\varphi \in \{-1,0,1\}\) of Section~\ref{subsec:geodesic_distance} and the wrapping map. The reconciliation is not a formality: the piecewise structure of the objective is a direct consequence of it, since it is the switching of the minimizing \(\varphi\), and nothing else, that creates the breakpoints.

\begin{lemma}\label{lemma:wr}
  For every \(a \in \mathbb{R}\),
\[
  \min_{\varphi \in \mathbb{Z}} |a + 2\pi\varphi| \;=\; |\mathrm{wr}(a)| ,
\]
the minimum being attained at the unique \(\varphi\) for which \(a + 2\pi\varphi \in [-\pi,\pi)\), except when \(\mathrm{wr}(a) = -\pi\), in which case the two representatives \(\pm\pi\) both attain it. If in addition \(a = \theta - \vartheta\) with \(\theta, \vartheta \in [0,2\pi)\), the minimizing \(\varphi\) belongs to \(\{-1,0,1\}\), so that
\[
  \delta^{\mathrm{geo}}(\theta,\vartheta)
  \;=\; \min_{\varphi \in \{-1,0,1\}} |\theta - \vartheta + 2\pi\varphi|
  \;=\; |\mathrm{wr}(\theta-\vartheta)| .
\]
\end{lemma}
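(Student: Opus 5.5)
The plan is to reduce everything to one fact: the integer translates $\{a + 2\pi\varphi : \varphi \in \mathbb{Z}\}$ form a lattice coset of spacing $2\pi$, and exactly one of them lands in the half-open window $[-\pi,\pi)$. First I will unpack the definition of $\mathrm{wr}$. Write $a + \pi = 2\pi q + \rho$ with $q = \lfloor (a+\pi)/2\pi \rfloor$ and $\rho \in [0,2\pi)$. Then
\[
\mathrm{wr}(a) = \rho - \pi = a - 2\pi q \in [-\pi,\pi),
\]
so $\mathrm{wr}(a) = a + 2\pi\varphi^\ast$ with $\varphi^\ast = -q$. Uniqueness of $\varphi^\ast$ is immediate, since two elements of the coset differ by a nonzero multiple of $2\pi$ and cannot both lie in an interval of length $2\pi$ that is half-open.

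Next, minimality. I will set $b = \mathrm{wr}(a) \in [-\pi,\pi)$. Any other representative is $b + 2\pi j$ with $j \neq 0$, and by the reverse triangle inequality
\[
|b + 2\pi j| \ge 2\pi|j| - |b| \ge 2\pi - \pi = \pi \ge |b|.
\]
Hence $|\mathrm{wr}(a)|$ attains the minimum. For the tie analysis, I will track when equality can hold throughout this chain. It requires $|j| = 1$ and $|b| = \pi$, that is, $b = -\pi$, together with $j = 1$, giving the representative $+\pi$. So a second minimizer exists only when $\mathrm{wr}(a) = -\pi$, and then the minimizers are exactly $\pm\pi$. When $|b| < \pi$ the inequality is strict and the minimizer is unique.

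For the second part, take $\theta,\vartheta \in [0,2\pi)$ and $a = \theta - \vartheta \in (-2\pi, 2\pi)$. Any minimizing $\varphi$ satisfies $|a + 2\pi\varphi| \le \pi$, so $2\pi|\varphi| \le \pi + |a| < 3\pi$, which forces $|\varphi| \le 1$. Thus every minimizer lies in $\{-1,0,1\}$, which covers both tie representatives. Restricting the minimum over $\mathbb{Z}$ to this set therefore does not change its value. Combining this with the first part and with the definition of $\delta^{\mathrm{geo}}$ from Section~\ref{subsec:geodesic_distance} yields the displayed chain of equalities.

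The only delicate step is the equality bookkeeping at the boundary $|b| = \pi$, because of the half-open convention $[-\pi,\pi)$. I will handle it by reading off the equality conditions in the reverse triangle inequality, as above, rather than by a separate case split.
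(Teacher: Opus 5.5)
Your proof is correct and is essentially the paper's argument. The coset $\{a+2\pi\varphi\}$ has exactly one point in $[-\pi,\pi)$, namely $\mathrm{wr}(a)$; every other point has modulus at least $2\pi-|\mathrm{wr}(a)|\ge\pi\ge|\mathrm{wr}(a)|$; and equality forces $\mathrm{wr}(a)=-\pi$. The one small difference is in the second part: you bound $|\varphi|$ for every minimizer via $2\pi|\varphi|\le\pi+|a|<3\pi$, whereas the paper enumerates which $\varphi\in\{-1,0,1\}$ brings $a$ into $[-\pi,\pi)$ on each piece of $(-2\pi,2\pi)$. Your version has the minor advantage of also covering the second minimizer in the tie case explicitly.
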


\noindent\textbf{Proof.} The candidates \(\{a + 2\pi\varphi : \varphi \in \mathbb{Z}\}\) form a lattice of spacing \(2\pi\). The intervals \(\big[(2k-1)\pi,\,(2k+1)\pi\big)\), \(k \in \mathbb{Z}\), tile \(\mathbb{R}\) without overlap, so exactly one lattice point falls in \([-\pi,\pi)\); by construction of the modulo operation that point is \(\mathrm{wr}(a)\). Any other lattice point is at distance at least \(2\pi\) from it, hence has modulus at least \(2\pi - |\mathrm{wr}(a)| \ge \pi \ge |\mathrm{wr}(a)|\). Both inequalities are strict unless \(|\mathrm{wr}(a)| = \pi\), that is unless \(\mathrm{wr}(a) = -\pi\), in which case \(-\pi\) and \(+\pi\) are two distinct minimizers.

For the second statement, \(\theta,\vartheta \in [0,2\pi)\) forces \(a \in (-2\pi, 2\pi)\), and the integer that brings \(a\) into \([-\pi,\pi)\) is \(\varphi = 0\) if \(a \in [-\pi,\pi)\), \(\varphi = -1\) if \(a \in [\pi, 2\pi)\), and \(\varphi = +1\) if \(a \in (-2\pi,-\pi)\). Restricting the search to \(\{-1,0,1\}\) is therefore an exhaustive enumeration. \qed

\medskip
Intuitively, $\mathrm{wr}(\cdot)$ induces a representation of the angle, $\theta_i$, that only depends on its antipode, $a_i=(\theta_i + \pi) \bmod 2\pi$, and that remains constant until the latter is crossed.
To see this, fix \(\theta_i\) and read \(\mathrm{wr}(\theta_i - \vartheta)\) as a function of \(\vartheta\), writing \(\mathrm{wr}(\theta_i - \vartheta) = \theta_i - \vartheta + 2\pi\varphi_i(\vartheta)\) with \(\varphi_i(\vartheta)\) the unique integer of the Lemma; $\theta_i + 2\pi\varphi_i(\vartheta)$ is the representative. 
As \(\vartheta\) increases, \(\theta_i - \vartheta\) decreases with slope \(-1\) and leaves \([-\pi,\pi)\) through its left endpoint exactly when \(\theta_i - \vartheta = -\pi\), that is when \(\vartheta = \theta_i + \pi \equiv a_i \pmod{2\pi}\). At that value \(\varphi_i\) increases by one and \(\mathrm{wr}(\theta_i - \vartheta)\) jumps from \(-\pi\) to \(+\pi\). Between two consecutive cut points \(\varphi_i\) is thus \emph{constant}.
The same holds for the entire vector of angles: the representatives stay constant between any two consecutive antipodes.

Moreover, since the jump of \(\mathrm{wr}\) is from \(-\pi\) to \(+\pi\), both \(|\mathrm{wr}|\) and  \(\mathrm{wr}^2\) are continuous everywhere: at \(\vartheta = a_i\) the \(i\)-th summand equals \(\pi^2\) on both sides.

Both facts are used below.

\subsection{Piecewise structure of the objective and the algorithm}\label{subsec:proofprop1}

By Lemma~\ref{lemma:wr}, the angular objective can be written as
\[
F(\vartheta) = \sum_{i=1}^m \delta^{\mathrm{geo}}(\theta_i, \vartheta)^2 = \sum_{i=1}^m \mathrm{wr}(\theta_i - \vartheta)^2 .
\]
Define the \emph{cut points} \(a_i = (\theta_i + \pi) \bmod 2\pi\), that is, the antipodes of the observations, and let \(a_{(1)} < \cdots < a_{(m')}\) be their distinct values sorted increasingly, with the convention \(a_{(m'+1)} = a_{(1)} + 2\pi\). They partition the circle into the arcs \(\mathcal{I}_j = \big(a_{(j)}, a_{(j+1)}\big)\), \(j = 1, \ldots, m'\).

\begin{proposition}\label{prop:frechet}
  The function \(F\) is continuous on \(S^1\) and piecewise quadratic, its breakpoints being contained in the cut set \(\mathcal{A} = \{a_i\}_{i=1}^m\). On each closed arc \(\overline{\mathcal{I}_j}\) it coincides with the convex quadratic \(\vartheta \mapsto \sum_{i=1}^m \big(\theta_i^{(j)} - \vartheta\big)^2\), where \(\theta_i^{(j)} = c_j + \mathrm{wr}(\theta_i - c_j)\) for any interior point \(c_j \in \mathcal{I}_j\). Consequently, Algorithm~\ref{alg:frechet_circular} returns a global minimizer of \(F\).
\end{proposition}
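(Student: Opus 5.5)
The proof rests on the two facts recorded after Lemma~\ref{lemma:wr}. The first is that the integers \(\varphi_i(\vartheta)\) are constant between consecutive antipodes. The second is that each summand \(\mathrm{wr}(\theta_i-\vartheta)^2\) is continuous. We take these in order, then use them to prove that Algorithm~\ref{alg:frechet_circular} is correct.

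\emph{Continuity and piecewise form.} Fix an arc \(\mathcal{I}_j\) and an interior point \(c_j\). No cut point \(a_i\) lies in \(\mathcal{I}_j\), so each \(\varphi_i(\vartheta)\) takes a single value on \(\mathcal{I}_j\). Hence \(\mathrm{wr}(\theta_i-\vartheta)=\theta_i+2\pi\varphi_i-\vartheta\) there. Evaluating at \(\vartheta=c_j\) identifies the representative \(\theta_i+2\pi\varphi_i\) with \(c_j+\mathrm{wr}(\theta_i-c_j)=\theta_i^{(j)}\), which therefore does not depend on the choice of \(c_j\). So on \(\mathcal{I}_j\) we have \(F(\vartheta)=\sum_i(\theta_i^{(j)}-\vartheta)^2\), a convex quadratic with leading coefficient \(m\). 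Each summand of \(F\) is continuous everywhere, including at the jump points \(a_i\), where it equals \(\pi^2\) from both sides. Hence \(F\) is continuous, and since the quadratic is continuous too, the identity extends to the closed arc \(\overline{\mathcal{I}_j}\). The breakpoints of \(F\) can therefore only lie in \(\mathcal{A}\). One degenerate case needs a remark: when \(m'=1\), there is a single arc of length \(2\pi\) whose endpoints are identified. The same argument applies with \(\overline{\mathcal{I}_1}\) read as a closed interval of length \(2\pi\) in the lift.

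\emph{Global minimization.} The arcs \(\overline{\mathcal{I}_j}\), \(j=1,\dots,m'\), cover \(S^1\). Therefore
\[
\min_{S^1}F=\min_j \min_{\vartheta\in\overline{\mathcal{I}_j}} \sum_i(\theta_i^{(j)}-\vartheta)^2 .
\]
On each closed arc the minimum of this convex quadratic is attained at the projection of the unwrapped mean \(\bar\theta^{(j)}=\frac1m\sum_i\theta_i^{(j)}\) onto \([a_{(j)},a_{(j+1)}]\). This is either \(\bar\theta^{(j)}\) itself, if it lies inside the arc, or the nearer endpoint. The endpoint case is harmless: the global minimizer of \(F\) must occur on some arc, and for the minimizing arc the candidate set enumerated over all \(j\) contains it. It even suffices to keep only the arc means that fall inside their arcs, together with the endpoint values, because a minimizer at an endpoint is also the constrained minimizer of an adjacent arc. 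I expect the algorithm to follow exactly this scheme. It sorts the antipodes, which is the \(O(m\log m)\) step. It then sweeps over the arcs, updating \(\sum_i\theta_i^{(j)}\) and \(\sum_i(\theta_i^{(j)})^2\) incrementally: crossing \(a_{(j+1)}\) adds \(2\pi\) to the representatives of the points whose antipode is \(a_{(j+1)}\), since \(\varphi_i\) increases by one there. Along the way it evaluates the projected candidates and returns the argmin, with ties broken by a fixed rule. The proof then reduces to matching each line of Algorithm~\ref{alg:frechet_circular} with these candidates and updates.

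\emph{Expected obstacle.} The delicate point is bookkeeping rather than analysis. Three things must be checked. First, the incremental update of the representatives at a cut point with multiplicity, where several \(\theta_i\) share the same antipode, must shift each of them by exactly \(2\pi\) with the correct sign. Second, the wrap-around from \(a_{(m')}\) to \(a_{(1)}+2\pi\) must be handled consistently. Third, the candidate the algorithm reports must be reduced modulo \(2\pi\) into \([0,2\pi)\). I would handle all three by proving an invariant: at step \(j\) the running sums equal \(\sum_i\theta_i^{(j)}\) and \(\sum_i(\theta_i^{(j)})^2\). This can be verified by induction on \(j\) from the jump description in Lemma~\ref{lemma:wr}.
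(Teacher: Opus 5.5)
Your proposal is correct and follows the paper's proof essentially step for step. Both establish that the representatives are constant on each open arc (identified by evaluating at \(c_j\)), extend the identity to \(\overline{\mathcal{I}_j}\) by continuity of the summands at the cut points, clamp the vertex \(\bar m_j\) to the closed arc, and take the minimum over the finitely many closed arcs covering \(S^1\). The bookkeeping you single out as the main obstacle is not needed for this proposition: Algorithm~\ref{alg:frechet_circular} recomputes the representatives on every arc directly as \(c_j + \mathrm{wr}(\theta_i - c_j)\) and evaluates \(F_j = F(\mu_j)\) from scratch, and periodicity of \(F\) justifies the reduction \(\bmod 2\pi\). Your running-sum invariant matters only for the \(O(m\log m)\) variant the paper describes after the proof.
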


\noindent\textbf{Proof.} Let \(F(\vartheta) = \sum_{i=1}^m \delta^{\mathrm{geo}}(\theta_i,\vartheta)^2\), let \(a_{(1)} < \cdots < a_{(m')}\) be the distinct cut points with \(a_{(m'+1)} = a_{(1)} + 2\pi\), and let \(\mathcal{I}_j = \big(a_{(j)}, a_{(j+1)}\big)\).

\smallskip
\noindent\emph{Step 1: the summands are quadratic on each arc.} By Lemma~\ref{lemma:wr}, \(\delta^{\mathrm{geo}}(\theta_i,\vartheta)^2 = \mathrm{wr}(\theta_i - \vartheta)^2\). 
By the previous observation, on the open arc \(\mathcal{I}_j\) there exist integers \(\varphi_i^{(j)}\), independent of \(\vartheta\), such that
\[
\mathrm{wr}(\theta_i - \vartheta) = \theta_i^{(j)} - \vartheta ,
\qquad \theta_i^{(j)} = \theta_i + 2\pi\varphi_i^{(j)} ,
\qquad \vartheta \in \mathcal{I}_j .
\]
Evaluating this identity at any interior point \(c_j \in \mathcal{I}_j\) and solving for \(\theta_i^{(j)}\) gives the constructive formula \(\theta_i^{(j)} = c_j + \mathrm{wr}(\theta_i - c_j)\) used in Algorithm~\ref{alg:frechet_circular}. Consequently \(F(\vartheta) = \sum_{i=1}^m \big(\theta_i^{(j)} - \vartheta\big)^2\) on \(\mathcal{I}_j\) is a convex quadratic function in \(\vartheta\).

\smallskip
\noindent\emph{Step 2: extension to the closed arc.} \(F\) is continuous on \(S^1\) by previous observations, and $\mathrm{wr}(\theta_i-\vartheta)^2$ is continuous on the closure of the $j\text{-th}$ interval, \(\overline{\mathcal{I}_j}\); two continuous functions that agree on the dense subset \(\mathcal{I}_j\) agree on its closure. The identity therefore holds on \(\overline{\mathcal{I}_j}\), which also shows that the breakpoints of \(F\) are contained in \(\mathcal{A} = \{a_i\}\).

\smallskip
\noindent\emph{Step 3: minimizing over one arc.} The unconstrained vertex of that quadratic is the arithmetic mean of the representatives, \(\bar m_j = m^{-1}\sum_{i=1}^m \theta_i^{(j)}\). A convex function restricted to a compact interval attains its minimum at the unconstrained vertex if that vertex lies inside, and at the nearer endpoint otherwise; both cases are covered by the clamp \(\mu_j = \min\big(\max(\bar m_j, a_{(j)}),\, a_{(j+1)}\big)\).

\smallskip
\noindent\emph{Step 4: from local to global.} The closed arcs \(\overline{\mathcal{I}_1}, \ldots, \overline{\mathcal{I}_{m'}}\) cover \(S^1\), so \(\min_{S^1} F = \min_j \min_{\overline{\mathcal{I}_j}} F = \min_j F(\mu_j)\), and the arc attaining the smallest \(F(\mu_j)\) contains a global minimizer. This is exactly what Algorithm~\ref{alg:frechet_circular} returns. \qed

\begin{algorithm}[htbp]
  \begin{spacing}{0.8}
    \caption{Exact Fr\'echet mean on the circle}
    \label{alg:frechet_circular}
    \begin{algorithmic}[1]
      \Require Angles \(\{\theta_i\}_{i=1}^{m} \subset [0, 2\pi)\)
      \Ensure \(\bar{\theta} \in \arg\min_{\vartheta \in [0,2\pi)} \sum_{i=1}^m \delta^{\mathrm{geo}}(\theta_i, \vartheta)^2\)
      \State Let \(\mathrm{wr}(a) = \big( (a + \pi) \bmod 2\pi \big) - \pi\)
      \State Compute the cut points \(a_i = (\theta_i + \pi) \bmod 2\pi\); let \(a_{(1)} < \cdots < a_{(m')}\) be their distinct sorted values and set \(a_{(m'+1)} = a_{(1)} + 2\pi\)
      \State \(F^\star \gets +\infty\)
      \For{\(j = 1\) to \(m'\)}
      \State \(c_j \gets \tfrac{1}{2}\big(a_{(j)} + a_{(j+1)}\big)\) \Comment{any interior point of \(\mathcal{I}_j\)}
      \State \(\theta_i^{(j)} \gets c_j + \mathrm{wr}(\theta_i - c_j)\), \ \(i = 1, \ldots, m\) \Comment{unwrapped representatives}
      \State \(\bar{m}_j \gets \frac{1}{m} \sum_{i=1}^m \theta_i^{(j)}\) \Comment{vertex of the quadratic on \(\mathcal{I}_j\)}
      \State \(\mu_j \gets \min\big( \max(\bar{m}_j, a_{(j)}), a_{(j+1)} \big) \bmod 2\pi\) \Comment{clamp to \(\overline{\mathcal{I}_j}\)}
      \State \(F_j \gets \sum_{i=1}^m \mathrm{wr}(\theta_i - \mu_j)^2\)
      \If{\(F_j < F^\star\)}
      \State \(F^\star \gets F_j\); \quad \(\bar{\theta} \gets \mu_j\)
      \EndIf
      \EndFor
      \State \Return \(\bar{\theta}\)
    \end{algorithmic}
  \end{spacing}
\end{algorithm}

\medskip
Two properties of the algorithm follow from the proof. First, the loop cannot be shortened to the single arc that happens to contain the observations: the global minimizer may sit on a different arc, and Figure~\ref{fig:frechet_arcs} exhibits such a case. Second, moving from \(\mathcal{I}_j\) to \(\mathcal{I}_{j+1}\) crosses exactly one cut point, so exactly one representative changes, by \(\pm 2\pi\); maintaining the two running totals \(S_j = \sum_i \theta_i^{(j)}\) and \(Q_j = \sum_i \big(\theta_i^{(j)}\big)^2\) therefore updates in \(O(1)\) both the vertex \(\bar m_j = S_j/m\) and the value \(F_j = Q_j - 2\mu_j S_j + m\mu_j^2\) that selects the winning arc, so that the body of the loop no longer touches the observations one by one. Since the cut points are sorted once in any case, the whole loop costs \(O(m \log m)\) instead of \(O(m^2)\).

\begin{figure}[htbp]
  \centering
  \begin{tikzpicture}[scale=1,>=Latex]
    \def\R{2.5}
    \draw[line width=2.4pt, black]                                (70:\R)  arc (70:200:\R);
    \draw[line width=2.4pt, black!45, dash pattern=on 5pt off 3pt] (200:\R) arc (200:280:\R);
    \draw[line width=2.4pt, black!45, dotted]                      (280:\R) arc (280:430:\R);
    \draw[black!25] (0,0) circle (\R);
    \fill (0,0) circle (1.3pt);
    \foreach \a/\lab in {20/{\theta_1}, 100/{\theta_2}, 250/{\theta_3}} {
      \draw[black!30, dashed] (\a:\R) -- ({\a+180}:\R);
      \fill (\a:\R) circle (2.7pt);
      \node at (\a:\R+0.40) {\(\lab\)};
    }
    \foreach \a/\lab in {200/{a_1}, 280/{a_2}, 70/{a_3}} {
      \draw[fill=white, line width=0.9pt] (\a:\R) circle (2.7pt);
      \node at (\a:\R+0.42) {\(\lab\)};
    }
    \draw[->, line width=1.4pt] (0,0) -- (3.33:\R);
    \node[fill=white, inner sep=1.5pt] at (3.33:1.35) {\(\bar\theta\)};
  \end{tikzpicture}

  \medskip
  {\small
  \begin{tabular}{@{}clcccrr@{}}
    \toprule
    & arc \(\mathcal{I}_j\) & \(\theta_1^{(j)}\) & \(\theta_2^{(j)}\) & \(\theta_3^{(j)}\) & \(\bar m_j\) & \(F(\mu_j)\)\\
    \midrule
    \swA & \((70^\circ, 200^\circ)\)  & \(20^\circ\)  & \(100^\circ\) & \(250^\circ\) & \(123.33^\circ\) & \(8.306\)\\
    \swB & \((200^\circ, 280^\circ)\) & \(380^\circ\) & \(100^\circ\) & \(250^\circ\) & \(243.33^\circ\) & \(11.961\)\\
    \swC & \((280^\circ, 430^\circ)\) & \(380^\circ\) & \(460^\circ\) & \(250^\circ\) & \(363.33^\circ\) & \(\mathbf{6.844}\)\\
    \bottomrule
  \end{tabular}}
  \caption{The piecewise structure exploited by Algorithm~\ref{alg:frechet_circular}, for the three angles \(\theta_1 = 20^\circ\), \(\theta_2 = 100^\circ\), \(\theta_3 = 250^\circ\) (filled dots). Each dashed diameter joins an observation to its own antipode \(a_i = (\theta_i + \pi) \bmod 2\pi\) (open dots), and it is the antipodes that cut the circle into three arcs, drawn in three line styles. Sorting them anticlockwise from \(\theta = 0\) renames them, which is the numbering used in Proposition~\ref{prop:frechet} and in the table below: \(a_{(1)} = a_3 = 70^\circ\), \(a_{(2)} = a_1 = 200^\circ\), \(a_{(3)} = a_2 = 280^\circ\). Within each arc the unwrapped representatives \(\theta_i^{(j)}\) are constant --- the table lists them --- and \(F\) coincides with a convex quadratic whose vertex is their arithmetic mean \(\bar m_j\); crossing a cut point moves exactly one representative by \(\pm 360^\circ\). Here every \(\bar m_j\) falls inside its own arc, so no clamping occurs, and the global minimizer is \(\bar\theta = 363.33^\circ \equiv 3.33^\circ\) (arrow), which lies on the dotted arc and not on the one containing the three observations. Inspecting that arc alone would have returned \(123.33^\circ\), at which \(F\) is \(21\%\) larger.}
  \label{fig:frechet_arcs}
\end{figure}

\section{Arc versus chord}\label{sec:arc_chord}

\begin{figure}[!htb]
  \centering
  \begin{subfigure}[b]{0.55\textwidth}
    \centering
    \begin{tikzpicture}
      \begin{axis}[
        width=0.96\linewidth, height=6cm,
        xmin=0, xmax=180, ymin=0, ymax=11,
        xtick={0,45,90,135,180},
        xticklabel={\(\pgfmathprintnumber{\tick}^\circ\)},
        ytick={0,2,4,6,8,10},
        xlabel={\(\Delta\)}, ylabel={squared angular cost},
        xlabel near ticks, ylabel near ticks,
        tick label style={font=\small}, label style={font=\small},
        legend cell align=left,
        legend style={at={(0.04,0.80)}, anchor=north west, draw=none, fill=none, font=\small},
        domain=0:180, samples=181,
        every axis plot/.append style={line width=1.2pt},
      ]
        \addplot[black] {(x*pi/180)^2};
        \addlegendentry{\(\Delta^2\) (arc)}
        \addplot[black!55, dashed] {4*sin(x/2)^2};
        \addlegendentry{\(4\sin^2(\Delta/2)\) (chord)}
        \draw[black!35, densely dotted] (axis cs:0,4) -- (axis cs:180,4);
        \draw[black!35, densely dotted] (axis cs:0,9.8696) -- (axis cs:180,9.8696);
        \node[anchor=south west, font=\scriptsize, black!60] at (axis cs:0,4) {\(4\)};
        \node[anchor=north west, font=\scriptsize, black!60] at (axis cs:0,9.8696) {\(\pi^2\)};
      \end{axis}
    \end{tikzpicture}
    \caption{squared arc and squared chord}
  \end{subfigure}\hfill
  \begin{subfigure}[b]{0.42\textwidth}
    \centering
    \begin{tikzpicture}[scale=0.85,>=Latex]
      \def\R{2.2}
      \draw[black!25] (0,0) circle (\R);
      \fill (0:\R) circle (3.0pt);   \node[anchor=north west] at (0:\R+0.10) {\small\(0^\circ (\times 3)\)};
      \fill (135:\R) circle (2.7pt); \node[anchor=south east] at (135:\R+0.06) {\small\(135^\circ\)};
      \fill (180:\R) circle (2.7pt); \node[anchor=east] at (180:\R+0.14) {\small\(180^\circ\)};
      \draw[->, line width=1.5pt] (0,0) -- (63:\R);
      \draw[->, line width=1.5pt, black!50, dash pattern=on 4pt off 2.5pt] (0,0) -- (28.68:\R);
      \node[anchor=south west] at (63:\R+0.06)    {\small\(\bar\theta_{\mathrm{geo}}\)};
      \node[anchor=west]       at (28.68:\R+0.10) {\small\(\bar\theta_{\mathrm{chord}}\)};
      \fill (0,0) circle (1.3pt);
    \end{tikzpicture}
    \caption{five angles, the two means}
  \end{subfigure}
  \caption{(a) Squared arc \(\Delta^2\) (solid) and squared chord \(4\sin^2(\Delta/2)\) (dashed) against the angular separation \(\Delta\): they agree near \(0\) and part company beyond, the chord saturating at \(4\) where the arc reaches \(\pi^2\). (b) The five angles \((0^\circ, 0^\circ, 0^\circ, 135^\circ, 180^\circ)\) with their mean direction \(\bar\theta_{\mathrm{chord}} = 28.68^\circ\) (dashed arrow) and their Fr\'echet mean \(\bar\theta_{\mathrm{geo}} = 63.00^\circ\) (solid arrow), \(34.32^\circ\) apart.}
  \label{fig:arc_chord}
\end{figure}

The two angular costs, \(\big(\delta^{\mathrm{geo}}\big)^2\) and \(\big(\delta^{\mathrm{chord}}\big)^2 = 4\sin^2(\delta^{\mathrm{geo}}/2) = 2(1 - \cos\delta^{\mathrm{geo}})\), are the squared \emph{arc} and the squared \emph{chord} subtending the same angular separation \(\Delta = \delta^{\mathrm{geo}} \in [0,\pi]\); Figure~\ref{fig:arc_chord}(a) plots both. Remark~1 records that they agree to second order; beyond that they diverge steadily: the ratio \(\psi(\Delta) = \Delta^2 / 4\sin^2(\Delta/2)\) increases monotonically from \(1\) as \(\Delta \to 0\) to \(\pi^2/4 \simeq 2.47\) at \(\Delta = \pi\), where the chord saturates at \(4\) while the arc reaches \(\pi^2 \simeq 9.87\). A point diametrically opposite the centroid thus weighs about two and a half times less under \(d_{\mathrm{chord}}\) than under \(d_{\mathrm{geo}}\), and the two centroids move apart accordingly.

\paragraph*{A worked example.}
Take the five angles \(\theta = (0^\circ, 0^\circ, 0^\circ, 135^\circ, 180^\circ)\) of Figure~\ref{fig:arc_chord}(b): a majority of three coincident observations and two stragglers pulling the other way. Each mean is optimal for its own cost and strictly suboptimal for the other's:

\begin{center}
  \begin{spacing}{1.0}
  \small
  \begin{tabular}{@{}lccc@{}}
    \toprule
    & \(\vartheta\) & \(F(\vartheta) = \sum_i \delta^{\mathrm{geo}}(\theta_i,\vartheta)^2\) & \(G(\vartheta) = \sum_i 2(1 - \cos(\theta_i - \vartheta))\)\\
    \midrule
    mean direction, \(\mathrm{atan2}\)                        & \(28.68^\circ\) & \(11.171\) & \(\mathbf{7.053}\)\\
    Fr\'echet mean, Algorithm~\ref{alg:frechet_circular}      & \(63.00^\circ\) & \(\mathbf{9.376}\)  & \(7.566\)\\
    \bottomrule
  \end{tabular}
  \end{spacing}
\end{center}

\noindent The two locations differ by \(34.32^\circ\): the mean direction sits \(19.1\%\) above the minimum of \(F\), the Fr\'echet mean \(7.3\%\) above the minimum of \(G\). The asymmetry is Remark~1 at work: the chord under-weights the far points, so the mean direction is dragged towards the majority at \(0^\circ\), whereas the Fr\'echet mean, charging \(135^\circ\) and \(180^\circ\) their full arc length, settles much closer to them.

\section{Full specification of the simulated designs}

Table~\ref{tab:sim-full} expands Table~\ref{tab:simconfigs} into one row per design,
reporting the parameters of every Mardia--Sutton component. Angular means are given in
degrees for readability. The eight factorial designs are reported at $\rho_1=\rho_2=0$;
each of them was also generated at the two positive correlation levels $\rho=0.5$ and
$\rho=0.9$, obtained by setting $\rho_{1,k}=\rho_{2,k}=\rho/\sqrt{2}$ for every
component and leaving all remaining parameters unchanged. As discussed in
Section~\ref{subsec:sim_results}, those replicates leave the comparison qualitatively
unchanged and are not reported.

\begin{table}[htbp]
  \centering
  \caption{Parameters of every simulated design. $\mu_0$ is the angular mean (degrees),
    $\mu$ the unconditional linear mean, $\kappa$ the angular concentration,
    $\sigma^2$ the unconditional linear variance and $(\rho_1,\rho_2)$ the
    angular--linear correlation parameters of each component. All designs use
    $n_k=100$ observations per cluster and $100$ Monte Carlo
    replications.}\label{tab:sim-full}
  \small
  \begin{tabular}{@{}llcccp{0.26\textwidth}@{}}
    \toprule
    Design & $\mu_0$ (deg) & $\mu$ & $\kappa$ & $\sigma^2$ & $(\rho_1,\rho_2)$\\
    \midrule
    \texttt{LL3} & $157.5,\ 112.5,\ 67.5$ & $0,2,4$ & $10$ & $0.1$ & $(0,0)$\\
    \texttt{HL3} & $157.5,\ 112.5,\ 67.5$ & $0,2,4$ & $0.5$ & $0.1$ & $(0,0)$\\
    \texttt{LH3} & $157.5,\ 112.5,\ 67.5$ & $0,2,4$ & $10$ & $2$ & $(0,0)$\\
    \texttt{HH3} & $157.5,\ 112.5,\ 67.5$ & $0,2,4$ & $0.5$ & $2$ & $(0,0)$\\
    \midrule
    \texttt{LL6} & $195,135,75,15,315,255$ & $0,2,4,4,2,0$ & $10$ & $0.1$ & $(0,0)$\\
    \texttt{HL6} & $195,135,75,15,315,255$ & $0,2,4,4,2,0$ & $0.5$ & $0.1$ & $(0,0)$\\
    \texttt{LH6} & $195,135,75,15,315,255$ & $0,2,4,4,2,0$ & $10$ & $2$ & $(0,0)$\\
    \texttt{HH6} & $195,135,75,15,315,255$ & $0,2,4,4,2,0$ & $0.5$ & $2$ & $(0,0)$\\
    \midrule
    \texttt{COR3} & $0,\ 60,\ 330$ & $0,0.5,0.5$ & $10$ & $0.5$ &
      $(0.60,0.75)$, $(-0.68,-0.68)$,\newline $(-0.75,-0.60)$\\
    \addlinespace
    \texttt{COR6} & $0,60,120,180,240,300$ & $0,1,0,1,0,1$ & $8$ & $0.6$ &
      $(0.60,0.75)$, $(0.68,0.68)$,\newline $(0.75,0.60)$, cycled\\
    \bottomrule
  \end{tabular}
\end{table}

\section{Model selection for the wind direction and \texorpdfstring{\nox}{NOx} data}\label{sec:case1_diagnostics}

Figure~\ref{fig:case1_diagnostics} reports the criteria used in
Section~\ref{subsec:case1_methods} to check the choice $K=2$, each method being scored
on its own terms. Panel~(a) shows, for $K=2,\ldots,6$, the mean silhouette width of
the partition returned by each $K$-means variant, computed under the distance that the
variant minimizes: the raw Euclidean distance on $(\theta,x)$ for \euckmeans, the
chordal distance~\eqref{eq:dchord} for \chordkmeans and the geodesic
distance~\eqref{eq:dgeo} for \geokmeans. The three curves are maximal at $K=2$ and
decrease from there on. Panel~(b) shows the BIC of \sengupta, which decreases
monotonically over the range and would select $K=6$, the largest value considered.
Figure~\ref{fig:case1_maps} shows the partitions behind these numbers. From $K=3$ on
the $K$-means variants subdivide the polluted sector by concentration level, while
\sengupta slices the whole range of directions into concentration bands at every $K$:
each additional component refines the description of the skewed \nox\ marginal, which
is what the BIC rewards.

\begin{figure}[!htpb]
  \centering
  \includegraphics[width=\textwidth]{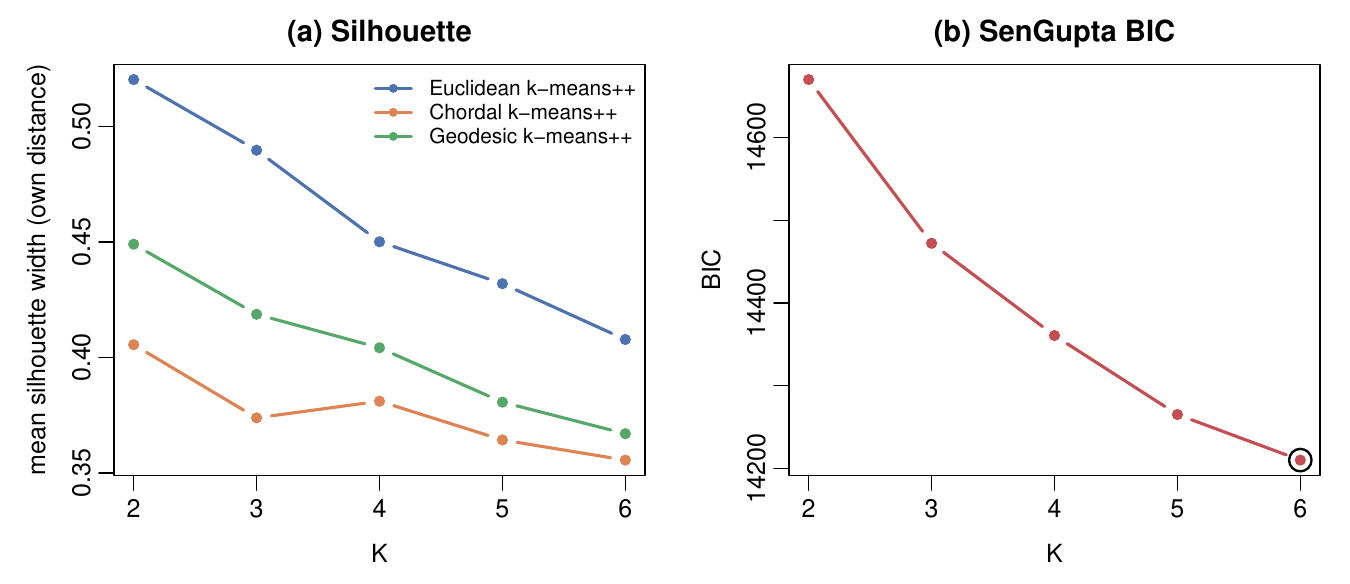}
  \caption{Choice of $K$ for the wind direction and \nox\ data. (a) Mean silhouette
    width of the partition of each $K$-means variant, computed under the variant's own
    distance. (b) BIC of \sengupta; the circle marks the minimum.\label{fig:case1_diagnostics}}
\end{figure}

\begin{figure}[!htpb]
  \centering
  \includegraphics[width=\textwidth]{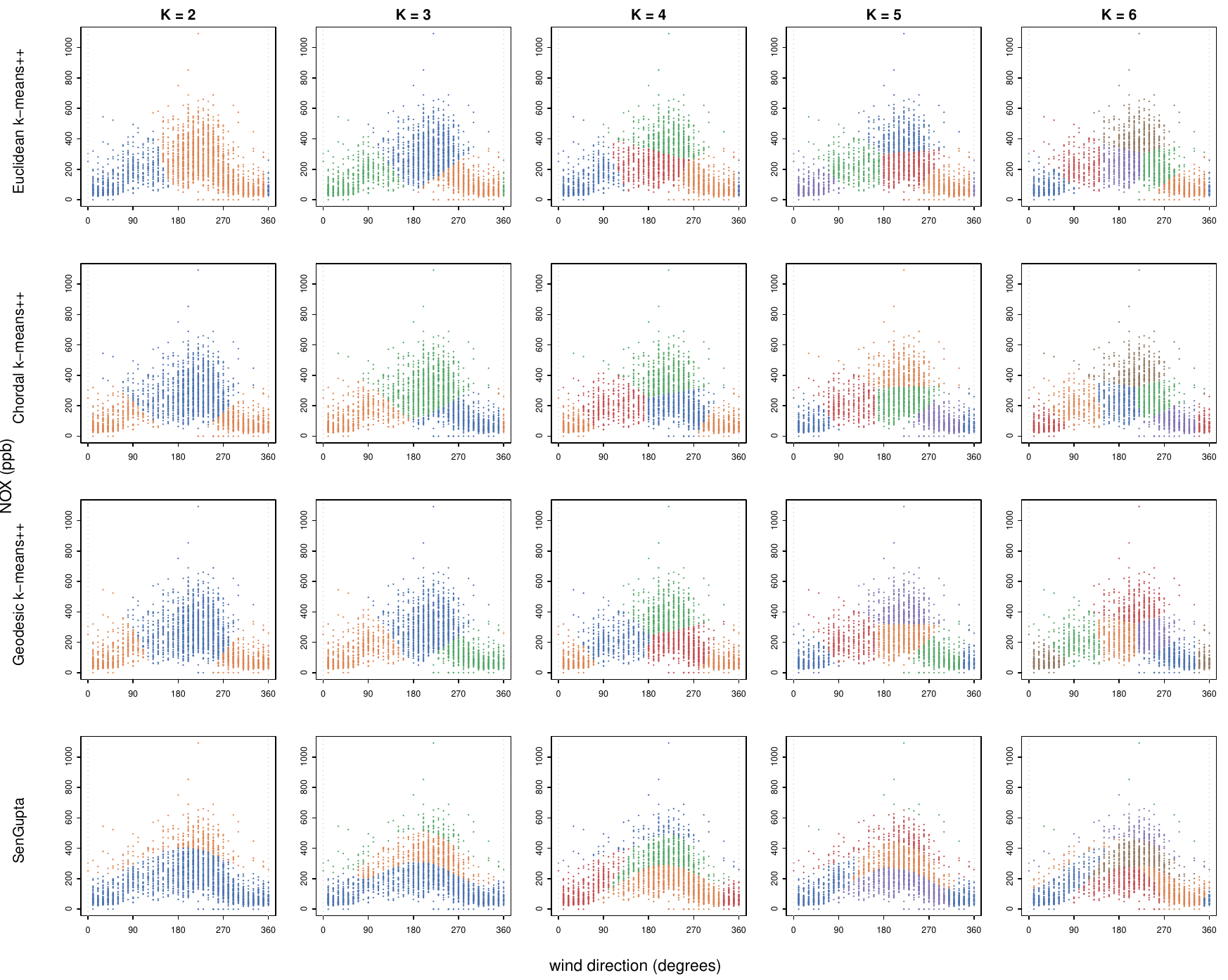}
  \caption{Partitions of the wind direction and \nox\ data for every method (rows) and
    $K=2,\ldots,6$ (columns), in the unrolled view. Colors mark the clusters; the
    dotted lines mark the $0/360^{\circ}$ seam.\label{fig:case1_maps}}
\end{figure}

\end{document}